\providecommand{\U}[1]{\protect\rule{.1in}{.1in}}
\newcommand{\ord}{\operatorname{ord}}
\let\scalebox\psscalebox
\newtheorem{theorem}{Theorem}
\theoremstyle{plain}
\newtheorem{definition}{Definition}
\newtheorem{example}{Example}
\newtheorem{lemma}{Lemma}
\newtheorem{proposition}{Proposition}
\numberwithin{equation}{section}
\begin{document}

\title[Linear relation on general ergodic T-Function]{Linear relation on general ergodic T-Function}
\author{Tao Shi}
\curraddr[Tao Shi]{The State Key Laboratory of Information Security\\
Institute of Software\\
Chinese Academy of Sciences \\
Beijing 100190, P.R.China\\
And Graduate University of Chinese Academy of Sciences \\
Beijing 100049, P.R.China}
\email[Tao Shi]{shitao@is.iscas.ac.cn}
\author{Vladimir Anashin}
\curraddr[Vladimir Anashin]{Institute for Information Security\\
Moscow State University, 119991, Moscow, Russia}
\email[Vladimir Anashin]{anashin@iisi.msu.ru; vladimir.anashin@u-picardie.fr}
\author{Dongdai Lin}
\curraddr[Dongdai Lin]{The State Key Laboratory of Information Security\\
Institute of Software\\
Chinese Academy of Sciences \\
Beijing 100190, P.R.China}
\email[Dongdai Lin]{ddlin@is.iscas.ac.cn}
\subjclass[2000]{Primary 05C38, 15A15; Secondary 05A15, 15A18}
\keywords{T-function, linear relation, stream cipher, $p$-adic ergodic theory}
\begin{abstract}
We find linear (as well as quadratic)  relations in a very large class of T-functions. The relations
may be used in analysis of T-function-based stream ciphers.

\end{abstract}
\maketitle


\section{{\protect\small Introduction}}

For years  linear feedback shift
registers (LFSRs) over a 2-element field $\mathbb F_2$ have been one of the most important building blocks in
keystream generators of stream ciphers. LFSRs can easily be
designed to produce binary sequences of the longest period (that is, of length $2^k-1$
for a $k$-cell LFSR over $\mathbb F_2$); LFSRs are fast and easy to
implement in hardware. However, sequences produced by LFSRs  have  linear dependencies that make easy to analyse  the sequences to construct
attacks on the whole cipher. To make output sequences of LFSRs more secure
these linear dependencies must be destroyed by a properly chosen
filter; this is the filter that carries the major cryptographical load making
the whole cipher secure.

Recently, T-functions were found to be useful tools to design fast
cryptographic primitives and ciphers based on usage on both arithmetic (addition, multiplication) and logical operations, see \cite{ASCCipher,hong05new,Hong05tsc3,KlSh,klimov03cryptographic,klimov04new,klimov05app,klimov05thesis,klsh04tfi,kotomina99,Mir1,ABCv3,tsc4Cipher,vest06phase2Cipher,vestwebsite}.
Loosely speaking, a T-function is a map of $k$-bit words into $k$-bit words such
that each $i$-th bit of image depends only on low-order bit $0,...,i$ of the
pre-image. Various methods are known to construct transitive T-functions
(the ones that produce sequences of the longest possible period, $2^k$),
see \cite{AnKhr,me:1,me:2,me:conf,me:ex,me-CJ,me-NATO,kotomina99,Lar,KlSh,klimov03cryptographic,klimov04new,klimov05thesis,hong05new}. Transitive T-functions have been considered as a candidate
to replace LFSRs in keystream generators of stream ciphers, see e.g. \cite{ABCv3,ASCCipher,Hong05tsc3,klsh04tfi,Mir1,tsc4Cipher} since sequences produced by T-function-based keystream generators
are proved to have a number
of good cryptographic properties, e.g.,  high linear and 2-adic complexity,
uniform distribution of subwords, etc., see \cite{AnKhr,Koloko,me-NATO,LinCompT-fun}. However, any word
sequence
produced by a transitive T-function has a well-known deficiency: the less
significant is the position $n$ of the bit in the word, the shorter is the period of the corresponding
bit sequence  in the output word sequence of words. To be more exact, given a transitive T-function $f$, consider a $k$-bit word sequence $
x_0, x_1,\ldots$ produced by $f$ with respect to the recurrence law
$$
x_i=f(x_{i-1})=f^i(x_0)
=\underbrace{f(\ldots(f(}_ix_0)\ldots),\qquad i=0,1,2,\ldots,
$$
(by the definition, $f^0(x_0)=x_0$);
denote $\delta_n(x_i)$ the $n$-th bit of the word $x_i$, $n=0,1,\ldots,
k-1$; then the length
of the shortest period of the bit
sequence $
\delta_n(x_0), \delta_n(x_1),\ldots$ (the \emph{$n$-th coordinate sequence})
is $2^{n+1}$. That is, only the highest
order coordinate sequence $\delta_{k-1}(x_0), \delta_{k-1}(x_1),\ldots$ reaches
the longest period, of length $2^k$. That is why the low-order coordinate
sequences are newer used to form a keystream, there either are just deleted
or serve to control other parts of the cipher. 

Moreover, the second half of the period of the coordinate sequence is just
the inverse of its first half:
\begin{equation}
\label{eq:half-per}
\delta_{n}(x_{i+2^{n}})\equiv\delta_{n}(x_{i})+1\pmod 2\text{, for
all}\ i,n=0,1,2,\ldots%
\end{equation}
Fortunately, the latter property does not cause big problems: speaking loosely,  given arbitrary
transitive T-function $f$, the half-periods $\delta_{n}(x_0),\ldots\delta_{n}(x_{2^{n}-1})$
should be considered as random and  adjacent coordinate sequences $\delta_{n-1}(x_0), \delta_{n-1}(x_1),\ldots$ and $\delta_n(x_0), \delta_n(x_1),\ldots$ as independent
(see Theorem \ref{thm:half-per} for  exact statements).

However, it was discovered that for \emph{certain} T-functions the said independence
of adjacent coordinate sequences does not take place: these sequences
satisfy linear relation of the form
\begin{equation}
\label{eq.1}
\delta_{n}(x_{i+2^{n-1}})\equiv\delta_{n}(x_{i})+\delta_{n-1}(x_{i})+z_{i}%
\pmod2,\ \text{for all}\ i=0,1,2,\ldots,%
\end{equation}
where  the length of the period of the sequence $z_i$ is only 4 (and \emph{not}
$2^n$ as in a general case, for arbitrary transitive T-function). Namely,
Molland and Helleseth in \cite{LinearKlSh05,LinPropT-func} proved this
for  a transitive T-function $f(x)=x+(x^{2}\vee C)$ suggested by Klimov and
Shamir in \cite{KlSh}; Jin-Song Wang and Wen-Feng Qi in \cite{LinPropPol} obtained
similar result for a transitive
polynomial function $f(x)=c_{0}+c_{1}x+c_{2}x^{2}+\cdots+c_{m}x^{m}$ with
integer coefficients $c_0,c_1,\ldots\in\mathbb Z=\{0,\pm 1, \pm 2, \ldots\}$.

\underline{Our contribution.} It is fourfold:
\begin{itemize}
\item First we prove that relations of type
 \eqref{eq.1}
hold for a much wider class of T-functions than polynomials over $\mathbb
Z$ and Klimov-Shamir functions $f(x)=x+(x^{2}\vee C)$, $C\in\mathbb Z$. This
wider class contains exponential T-functions (like $f(x)=3x+3^x$, fractional T-functions (like $f(x)=1+x+\frac{4}{1+2x}$) and many other T-functions that
might be extremely complex compositions of numerical and logical operators, like the following one:
\begin{equation}
\label{eq:wild}
f(x)=
\frac{x}{3}+ \left(\frac{1}{3}\right)^x+4\cdot\left(1-2\cdot\frac{\neg(x\wedge x^{2}+x^3\vee x^4)}{3 - 4\cdot(5+6x^5)^{x^6\oplus x^7}}\right)^{7+\left(\frac{8x^8}{9+10x^9}\right)}.
\end{equation}
In Theorem \ref{thm.11} below we prove that for the mentioned class of T-functions (which
is precisely defined further) relation \eqref{eq.1} holds; the length of
the period of the binary sequence $z_i$ in the relation depends on the function
$f$ and is not necessarily 4 any longer; however, it is still short.
\item Second, for a slightly narrower class of T-functions than the previous
one, we prove that a \emph{quadratic relation holds for any three consecutive coordinate
sequences}, see Theorem \ref{thm:non-lin} further. Earlier a relation of this
sort was known only for Klimov-Shamir T-function, see paper \cite{AlgStr_KlSh}
by Yong-Long Luo and Wen-Feng Qui.
\item Third, we show that both linear and quadratic relations of this
sort hold not only for univariate T-functions, but \emph{also for multi-word
T-functions and even for cascaded compositions of T-functions with other
generators}.
\item Finally we demonstrate how using  the  mentioned relations  between
coordinate sequences one can recover the rest coordinate sequences of  lower
orders 
\emph{even if  a T-function  from the mentioned class  has not been
specified}. That is, for instance, if $f$ is a
polynomial with integer coefficients, there is needless to know its coefficients
to recover  low-order coordinate sequences $(\delta_{n-2}(x_i)), (\delta_{n-3}(x_i)), \ldots$, given only a pair of coordinate
sequences $(\delta_{n}(x_i))$ and $(\delta_{n-1}(x_i))$. This is an important
conclusion since in some stream ciphers (see e.g., \cite{ABCv3,ABCv2}) coefficients of a T-function are formed during a `warming-up' stage; i.e., the coefficients
are obtained from a key and an initial vector by a special complicated procedure
and thus are not known to a cryptanalyst.
\end{itemize}
The paper serves a sort of a warning to a designer of a  T-function-based stream cipher to avoid possible flaws: both the choice of T-function and
the way it is used must guarantee that either there are no relations of this
sort among coordinate sequences or they are hidden deep enough (e.g., by
a proper filter) to prevent using them by a cryptanalyst. Even truncation
of
low-order bits may not be a remedy!

Last, but not least: we obtain our results by using  techniques of 2-adic analysis; that is, we
we expand T-functions on the whole space $\mathbb Z_2$ of 2-adic integers
and study the corresponding dynamics.
That is why we need to introduce some notions and results from 2-adic analysis
(and the 2-adic ergodic theory) before stating our results. It worth
noting here that the approach based on 2-adic dynamics (and wider, on $p$-adic
dynamics and on algebraic dynamics) recently proved its effectiveness in
various
cryptographic applications, see
corresponding monograph
\cite{AnKhr} for further details.

The paper is organized as follows:
\begin{itemize}
\item Section \ref{sec:T-non_A} concerns basics of the non-Archimedean
theory for T-functions;
\item Section \ref{sec:main-st} states our main two results (see
Appendix \ref{sec:proofs} for proofs);
\item Section \ref{sec:Ap-TfSC} discusses applications 
to T-function-based stream ciphers;
\item we conclude in Section \ref{sec:Concl}.
\end{itemize}


\section{{\protect\small The 2-adic theory of T-functions: brief survey}}
\label{sec:T-non_A}
In this section we introduce basics of what can be called a non-Archimedean
approach to T-functions. We start with a definition of a T-function and show that
T-functions can be treated as continuous
functions defined on and valued in the space of 2-adic integers. Therefore
we introduce basics of 2-adic arithmetic and of 2-adic Calculus that we will
need to state and prove our main result. There are many comprehensive monographs
on  $p$-adic numbers and $p$-adic analysis that contain all necessary definitions
and proofs, see e.g. \cite{Kobl,Mah,Sch}
or introductory chapters in \cite{AnKhr}; so further in the section we introduce
2-adic numbers in a somewhat informal manner.

It worth noting here that  the
theory of T-functions (which actually are functions that satisfy a Lipschitz
condition with a constant 1 w.r.t. 2-adic metric) was developed   by mathematicians
during decades prior to first publication of Klimov and Shamir on T-functions
\cite{KlSh} in 2003,
and in a much more general setting, for arbitrary prime $p$, and
not only for $p=2$. Moreover, various criteria of invertibility and single
cycle property of T-functions were obtained within $p$-adic ergodic theory
(see e.g. \cite{me:conf,me:1})
nearly a decade prior to the first publication
of Klimov and Shamir on T-functions \cite{KlSh}: Actually
a T-function $f$ is invertible if and only if it preserves Haar measure on 2-adic
integers, and $f$ has a single cycle property  if and only if it is ergodic
w.r.t. the Haar measure.  Unfortunately, cryptographic community were not aware of  that
work done by mathematicians although in
various papers there was directly pointed out that these functions  might be
useful to cryptography, see  e.g. \cite{me:1,me:conf,me:ex,me:2}.
To the
moment, there exists a well developed mathematical discipline, the $p$-adic
ergodic theory, a part of the non-Archimedean dynamics, and various crucial
cryptographic  properties
of T-functions can be studied, properly understood and explained within this
theory. Moreover, the theory has a well-developed tools to study cascaded
compositions that include T-functions along with other standard cryptographic
primitives (e.g., LFSRs): the compositions can be treated as wreath products
of dynamical systems, and single cycle property of the composition is just
ergodicity of the corresponding dynamical system, the wreath product.
So the present paper serves an example of  how effective are tools of the mentioned theory in a study of concrete cryptographical properties.
For further reading on the theory as well as on its applications to
cryptography (and to other sciences) readers are referred to monograph \cite{AnKhr}.

\subsection{T-functions}
An $n$-variate T-function is a mapping
\begin{equation}
\label{eq:t-mult}
\left(\alpha_0^{\downarrow},\alpha_1^{\downarrow},
\alpha_2^{\downarrow},\ldots\right)\mapsto
\left(\Phi_0^{\downarrow}\left(\alpha_0^{\downarrow}\right),
\Phi_1^{\downarrow}\left(\alpha_0^{\downarrow},
\alpha_1^{\downarrow}\right),
\Phi_2^{\downarrow}\left(\alpha_0^{\downarrow},
\alpha_1^{\downarrow},\alpha_2^{\downarrow}\right),
\ldots\right),
\end{equation}
where $\alpha_i^{\downarrow}\in\mathbb F_2^n$ is a Boolean columnar
$n$-dimensional vector over a 2-element field $\mathbb F_2=\{0,1\}$, and
$$
\Phi_i^{\downarrow}\colon (\mathbb F_2^n)^{i+1}\to \mathbb F_2^m
$$
maps $(i+1)$ Boolean columnar $n$-dimensional vectors
$\alpha_0^{\downarrow},\ldots,\alpha_i^{\downarrow}$ to
$m$-dimensional columnar Boolean vector
$\Phi_i^{\downarrow}\left(\alpha_0^{\downarrow},
\ldots,\alpha_i^{\downarrow}\right)$. Accordingly, a
{univariate T-function $f$} is a mapping
\begin{equation}
\label{eq:t-uni}
(\chi_0;\chi_1;\chi_2; \ldots)\stackrel{f}{\mapsto}
(\psi_0(\chi_0);\psi_1(\chi_0,\chi_1);
\psi_2(\chi_0,\chi_1,\chi_2);\ldots),
\end{equation}
where $\chi_j\in\{0,1\}$, and each
$\psi_j(\chi_0,\ldots,\chi_j)$ is a Boolean function in
Boolean variables $\chi_0,\ldots,\chi_j$. $T$-functions may
be viewed as mappings from non-negative integers to
non-negative integers: e.g., a univariate $T$-function $f$
sends a number with the base-$2$ expansion
$$
\chi_0+\chi_1\cdot 2+\chi_2\cdot 2^2+\cdots
$$
to the number with the base-2 expansion
$$
\psi_0(\chi_0)+\psi_1(\chi_0,\chi_1)\cdot
2+\psi_2(\chi_0,\chi_1,\chi_2)\cdot 2^2+\cdots
$$
Further in the paper we refer to these Boolean functions
$\psi_0, \psi_1,\psi_2,\ldots$ as \emph{coordinate
functions} of a $T$-function $f$.
If we restrict
$T$-functions to the set of all numbers whose base-$2$
expansions are not longer than $k$, we sometimes refer to these
restrictions as \emph{$T$-functions on $k$-bit words}: We usually associate
the set of all $k$-bit words to the set  $\{0,1,\ldots, 2^k-1\}$ of all residues modulo $2^k$; the latter set constitutes the residue ring $\mathbb Z/2^k\mathbb
Z$ modulo $2^k$ w.r.t. modulo $2^k$ operations of addition and multiplication.

The determinative property of T-functions (which might be used to state
equivalent
definition of a T-function) is \emph{compatibility} with all congruences
modulo powers of 2: Given a (univariate) T-function $f$,
\begin{equation}
\label{eq:compat}
\text{if}\ a\equiv b\pmod {2^s}\ \text{then}\ f(a)\equiv f(b) \pmod{2^s}.
\end{equation}
Vice versa, every compatible map is a T-function.

Important examples of $T$-functions are basic machine
instructions:
\begin{itemize}
\item integer arithmetic operations (addition,
multiplication,\ldots);
\item bitwise logical operations ($\vee$, $\oplus$, $\wedge$,
$\neg$);
\item some their compositions (masking, shifts towards high
order bits, reduction modulo $2^k$).
\end{itemize}
Since obviously a composition of T-functions is a
T-function (for instance, \emph{any polynomial with
integer coefficients is a T-function}), the T-functions
are natural functions that can be evaluated by  digital computers. 

\subsection{2-adic numbers and 2-adic Calculus}
\label{anashin:sec:NA}

As it follows directly from the definition, any T-function
is well-defined on the set $\mathbb Z_2$ of all infinite binary
sequences $\ldots\delta_2(x)\delta_1(x)\delta_0(x)=x$, where
$\delta_j(x)\in\{0,1\}$, $j=0,1,2,\ldots$. Arithmetic
operations (addition and multiplication) with these
sequences could be defined via standard ``school-textbook''
algorithms of addition and multiplication of natural numbers
represented by base-$2$ expansions. Each term of a sequence
that corresponds to the sum (respectively, to the product)
of two given sequences could be calculated by these
algorithms within a finite number of steps.

Thus, $\mathbb Z_2$ is a commutative ring with respect to the so
defined addition and multiplication. The ring $\mathbb Z_2$ is
called the ring of \emph{$2$-adic integers}. The ring $\mathbb Z_2$
contains a subring $\mathbb Z$ of all rational integers: For
instance, $\ldots111=-1$, since
$$
\arraycolsep0pt\renewcommand{\arraystretch}{0}
\begin{array}{@{\protect\vphantom{a_0^0}}rccccc}
&\ldots&1&1&1&1\\
\multicolumn1r{\raisebox{-.5\height}[0pt][0pt]{$+$ }}&&&&&\\
&\ldots&0&0&0&1\\\cline{2-6}&\ldots&0&0&0&0
\end{array}
$$

Moreover, the ring $\mathbb Z_2$ contains all rational numbers that
can be represented by irreducible fractions with odd
denominators. For instance, the following calculations show
that $\ldots01010101\times \ldots00011=\ldots111$, i.e.,
that $\ldots01010101=-1/3$ since $\ldots00011=3$ and
$\ldots111=-1$:

$$
\arraycolsep0pt\renewcommand{\arraystretch}{0}
\begin{array}{@{\protect\vphantom{a_0^0}}rccccccc}
&\ldots&0&1&0&1&0&1\\
\multicolumn1r{\raisebox{-.5\height}[0pt][0pt]{$\times$
}}&&&&&&&\\ &\ldots&0&0&0&0&1&1\\\cline{2-8}
&\ldots&0&1&0&1&0&1\\\multicolumn1r{\raisebox{-.5\height}[0pt][0pt]{$+$
}}&&&&&&&\\&\ldots&1&0&1&0&1&\\\cline{2-8}
&\ldots&1&1&1&1&1&1
\end{array}
$$

Sequences with only finite number of $1$s correspond to
non-negative rational integers in their base-$2$ expansions,
sequences with only finite number of $0$s correspond to
negative rational integers, while eventually periodic
sequences (that is, sequences that become periodic starting
with a certain place) correspond to rational numbers
represented by irreducible fractions with odd denominators:
For instance, $3=\ldots00011$, $-3=\ldots11101$,
$1/3=\ldots10101011$, $-1/3=\ldots1010101$. So the $j$-th
term $\delta_j(u)$ of the corresponding sequence $u\in\mathbb Z_2$
is merely the $j$-th digit of the base-$2$ expansion of $u$
whenever $u$ is a non-negative rational integer,
$u\in\mathbb N_0=\{0,1,2,\ldots\}$.

What is important, the ring $\mathbb Z_2$ is a metric space with
respect to the metric (distance) $d_2(u,v)$ defined by the
following rule: $d_2(u,v)=\|u-v\|_2=1/2^n$, where $n$ is the
smallest non-negative rational integer such that
$\delta_n(u)\ne\delta_n(v)$, and $d_2(u,v)=0$ if no such $n$
exists (i.e., if $u=v$). For instance $d_2(3,1/3)=1/8$. The
function $d_2(u,0)=\|u\|_2$ is the \emph{2-adic absolute value} of the $2$-adic integer
$u$, and $\ord_2 u=-\log_2\|u_2\|_2$ is the $2$-adic valuation
of $u$. Note that for $u\in\mathbb N_0$ the valuation $\ord_2 u$ is
merely the exponent of the highest power of $2$ that divides
$u$ (thus, loosely speaking, $\ord_2 0=\infty$, so
$\|0\|_2=0$).

Now we can represent every 2-adic integer $x=\ldots\delta_2(x)\delta_1(x)\delta_0(x)$
(where
$\delta_i(x)\in\{0,1\}$, $i=0,1,2,\ldots$) as the series
\begin{equation}
\label{eq:can}
x=\sum_{i=0}^\infty\delta_i(x)\cdot 2^i; \ \ (\text{where}\ \delta_i(x)\in\{0,1\}, i=0,1,2,\ldots).
\end{equation}
The series in the right-hand side are called \emph{canonical 2-adic expansion}
of the 2-adic integer $x$; the series converges to $x$ with respect to the
2-adic metric.

Although T-functions are maps from 2-adic integers to 2-adic integers,
we also introduce here 2-adic numbers whic are not necessarily 2-adic integers.
Denote $\mathbb Q_2$ the set of all series of the form $u=\sum_{i=-k}^\infty\alpha_i\cdot
2^i$ for all $k=0,1,2,\ldots$ and all $\alpha_{-k},\alpha_{-k+1},\ldots\in\{0,1,\}$.
In a way similar to that we have defined addition and multiplication on $\mathbb Z_2$,
we define these operations on $\mathbb Q_2$; the set $\mathbb Q_2$ with respect
to the so defined addition and multiplication is a \emph{field of 2-adic
numbers}, whereas $\mathbb Z_2$ is a ring of integers of this field. The absolute value $\|\cdot\|_2$ can be expanded to the whole field $\mathbb
Q_2$ (by setting $\|u\|_2=2^{-\ell}$ where $\ell$ is the smallest of $j=-k,-k+1,\ldots$
such that $\alpha_j\ne 0$); so $\mathbb Q_2$ is a metric space, and the 2-adic
absolute
value $\|\cdot\|_2$
satisfy all usual axioms. In particular, given $a,b,c\in\mathbb Q_2$,
\begin{enumerate}
\item $\|a\cdot b\|_2=\|a\|_2\cdot\|b\|_2$,
\item $\|a- c\|_2\le\|a-b\|_2+\|b-c\|_2$ (the \emph{triangle inequality}).
\end{enumerate}
It worth noting here that for the 2-adic metric the triangle inequality actually
holds in a stronger form:
$$
\|a- c\|_2\le\max\{\|a-b\|_2,\|b-c\|_2\}\  (\text{the \emph{strong} triangle inequality}),
$$
for all $a,b,c,\in\mathbb Q_2$. Now metric on the $n$-th Cartesian power $\mathbb Q_2^n$ of $\mathbb
Q_2$ can be defined in
the following way:
$
\|(a_1,\ldots,a_n)-(b_1,\ldots,b_n)\|_2=\max\{\|a_i-b_i\|_2\colon i=1,2,\ldots,n\}
$
for every $(a_1,\ldots,a_n),(b_1,\ldots,b_n)\in\mathbb Q_2^n$.

Once the metric is defined, one defines notions of
convergent sequences, limits, continuous functions on the
metric space, and derivatives if the space is a commutative
ring. For instance, with respect to the 2-adic metric the following sequence tends to $-1$:
$$
1,3,7,15,31,\ldots,2^n-1,\ldots\xrightarrow[d_2]{}-1.
$$
Derivations of a function $f\colon\mathbb Z_2\to\mathbb Z_2$, which is defined
on and valuated in the space $\mathbb Z_2$ of 2-adic integers, may be defined
in a standard way as in classical (e.g., real) Calculus just by replacing
real absolute value $|\cdot|$ by the 2-adic absolute value $\|\cdot\|_2$,
as follows:
\begin{definition}[2-adic differentiability]
\label{def:2-diff}
The function $f$ is said to be differentiable at the point $x\in\mathbb Z_2$
\textup{(}and the 2-adic number $f^\prime(x)\in\mathbb Q_2$ is said to be its derivative
at the point $x$\textup{)}
if and only if
for  arbitrary $M\in\mathbb N=\{1,2,\ldots\}$ and sufficiently small (w.r.t. the 2-adic absolute value) $h$ the following inequality holds:
$$\left\|\frac{f(x+h)-f(x)}{h}-f^\prime
(x)\right\|_2\le\frac{1}{2^M}$$
\end{definition}

Reduction modulo $2^n$ of a $2$-adic integer $v$, i.e.,
setting all terms of the corresponding sequence with indexes
greater than $n-1$ to zero (that is, taking the first $n$
digits in the representation of $v$) is just an
approximation of a $2$-adic integer $v$ by a rational
integer with precision $1/2^n$: This approximation is an
$n$-digit positive rational integer $v \wedge (2^n-1)$; the
latter will be denoted also as $v\bmod{2^n}$.

Actually \emph{a processor works with approximations of
$2$-adic integers with respect to $2$-adic metric}: When an
overflow happens, i.e., when a number that must be written
into an $n$-bit register consists of more than $n$
significant bits, the processor just writes only $n$ low
order bits of the number into the register thus reducing the
number modulo $2^n$. Thus, precision of the approximation is
defined by the bitlength of the processor.
\subsection{2-adic continuity of T-functions}
\label{ssec:T-Calc}
What is most important within the scope of the paper is that
all T-functions are \emph{continuous} functions of
2-adic variables since \emph{all T-functions satisfy
{Lipschitz condition with a constant $1$} with respect to
the 2-adic metric}, and vice versa.

Indeed, it is obvious that the function $f\colon
\mathbb Z_2\to\mathbb Z_2$ satisfy the condition
$\|f(u)-f(v)\|_2\le\|u-v\|_2$ for all $u,v\in\mathbb Z_2$ if and
only if $f$ is compatible, since the inequality
$\|a-b\|_2\le1/2^k$ is just equivalent to the congruence
$a\equiv b \pmod{2^k}$. A similar property holds for
$n$-variate T-functions (we just use the metric $\|\cdot\|_2$ on the $n$-Cartesian
power $\mathbb Z_2^n$).
So we conclude:
\begin{center}\em
T-functions${}={}$compatible functions${}=1$-Lipschitz
functions
\end{center}
This implies in particular that given a T-function $f\colon\mathbb Z_2\to\mathbb
Z_2$ and $n\in\mathbb N$, the map $f\bmod 2^n\colon z\mapsto f(z)\bmod 2^n$
is a well-defined transformation of the residue ring $\mathbb Z/2^n\mathbb
Z=\{0,1,\ldots,2^n-1\}$; actually the reduced map $f\bmod 2^n$ is a T-function
on $n$-bit words.

The observation we just have made indicates why the
 the $2$-adic analysis can be used
in a study of T-functions. For instance, one can prove
that the following functions satisfy Lipschitz condition
with a constant $1$ and thus are T-functions (and so also
be used  in compositions of  cryptographic primitives):
\begin{itemize}
\item subtraction: $(u,v)\mapsto u-v$;
\item exponentiation: $(u,v)\mapsto (1+2u)^v$;
\item raising to negative powers, $u\mapsto(1+2u)^{-n}$;
\item division: $(u,v)\mapsto \frac{u}{1+2v}$.
\end{itemize}

We now consider derivations of T-functions. We first note that as a T-function
is mere a 1-Lipschitz function w.r.t. 2-adic metric, once the derivative
exists, \emph{the derivative must be a 2-adic integer}. That is, for
the case of  T-functions we can re-state
Definition \ref{def:2-diff} in the following equivalent form:
\begin{definition}[differentiability of T-functions]
\label{def:2-diff_T}
A T-function $f\colon\mathbb Z_2\to\mathbb Z_2$ is said to be differentiable at the point $x\in\mathbb Z_2$
\textup{(}and the 2-adic number $f^\prime(x)\in\mathbb Z_2$ is said to be its derivative
at the point $x$\textup{)}
if and only if
for  arbitrary $M\in\mathbb N=\{1,2,\ldots\}$ and sufficiently small \textup{(w.r.t. the 2-adic absolute value)} $h\in\mathbb Z_2$ the following congruence holds:
$$f(x+h)\equiv f(x)+f^\prime(x)\cdot h\pmod{2^{\ord_2h+M}}$$

\end{definition}
\begin{example}[differentiability of $\wedge$]
The function $f(x)=x\wedge c$ is differentiable at every  $x\in\mathbb Z_2$ for any $c\in \mathbb Z$, and
$$
f^\prime(x)=
\begin{cases}
0,& \text{if $c\ge 0$};\\
1,& \text{if $c<0$}.
\end{cases}
$$
\end{example}
\begin{proof}
Indeed, take $n$ greater than the bitlength of $|c|$ (that is, $n\ge\log_2|c|+1$); then for all $s\in\mathbb Z_2$:

$$
f(x+2^ns)=
\begin{cases}
f(x)&, \text{if $c\ge 0$},\\
f(x)+2^ns&, \text{if $c<
0$},
\end{cases}
$$

\end{proof}
In the same manner we can fill the rest of the table of derivations of logical
T-functions:
\begin{example}[derivations of other logical T-functions]
Let $c\in\mathbb Z$, then for every $x\in\mathbb Z_2$
\begin{equation}
(\neg x)^\prime=-1;\qquad
(x\oplus c)^\prime=
\begin{cases}
~~1,& \text{if $c\ge 0$};\\
-1,& \text{if $c<0$}.
\end{cases}
\qquad
(x\vee c)^\prime=
\begin{cases}
1,& \text{if $c\ge 0$};\\
0,& \text{if $c<0$}.
\end{cases}
\end{equation}
\end{example}
Note that rules of derivations (e.g., chain rule) do not depend on metric;
thus they are the same both in a classical and in a 2-adic cases, so
applying the rules one can find derivatives of T-functions that are used in stream
ciphers:
\begin{example}[derivative of the Klimov-Shamir T-function]
$$
(x+(x^2\vee 5))^\prime=1+2x
$$
\end{example}
Now with the use of Definition \ref{def:2-diff_T} we define the notion of
\emph{uniform differentiability} of a T-function in the same way as in
classical Calculus:
\begin{definition}[uniform differentiability]
A T-function $f\colon\mathbb Z_2\to\mathbb Z_2$ is called uniformly differentiable \textup{(or, equidifferentiable)} 
iff for every sufficiently large  $M\in\mathbb N$ there exists $K\in\mathbb N$
such that once $|h|_2\leqslant\frac{1}{2^K}$ \textup{(that is, once $h\equiv
0\pmod{2^K}$)},  the congruence
$$f(x+h)\equiv f(x)+f^\prime(x)\cdot h\pmod{2^{\ord_2h+M}}$$
holds for all $x\in\mathbb Z_2$. Given $M$, the minimum  $K=K(M)$ with this property
is denoted via $N_M(f)$.
\end{definition}
For instance, it can be easily verified that Klimov-Shamir T-function $f(x)=x+(x^2\vee
5)$ is uniformly differentiable and $N_M(f)=M$.

Now we introduce another notion related to differentiability that has no
direct analogs in classical Calculus.
\begin{definition}[differentiability modulo $2^M$]
\label{def:diff_mod}
Given $M\in\mathbb N$, a T-function $f\colon\mathbb Z_2\to\mathbb Z_2$ is said to be differentiable
modulo $2^M$ at the point $x\in\mathbb Z_2$
\textup{(}and the 2-adic integer $f_M^\prime(x)\in\mathbb Z_2$ is said to be its derivative modulo $2^M$
at the point $x$\textup{)}
if and only if
for a  
sufficiently small \textup{(w.r.t. the 2-adic absolute value)} $h\in\mathbb Z_2$ the following congruence holds:
$$f(x+h)\equiv f(x)+f^\prime(x)\cdot h\pmod{2^{\ord_2h+M}}.$$

\end{definition}
\begin{definition}[uniform differentiability modulo $2^M$]
\label{def:unidiff_mod}
Given $M\in\mathbb Z$, a T-function $f\colon\mathbb Z_2\to\mathbb Z_2$ is called uniformly differentiable modulo $2^M$ 
iff 
there exists $K\in\mathbb N$
such that once $|h|_2\leqslant\frac{1}{2^K}$ \textup{(that is, once $h\equiv
0\pmod{2^K}$)},  the congruence
$$f(x+h)\equiv f(x)+f^\prime(x)\cdot h\pmod{2^{\ord_2h+M}}$$
holds for all $x\in\mathbb Z_2$. The minimum  $K=K(M)$ with this property
is denoted via $N_M(f)$.
\end{definition}
Note that the notion of derivative modulo $2^M$ is somewhat like saying
`a derivative with a precision of $M$ digits after the point' in classical
Calculus; however, the latter in  real Calculus is meaningless, whereas
in 2-adic Calculus the phrase has a precise mathematical meaning.

From  Definition \ref{def:diff_mod} it readily follows that the  derivative modulo $2^M$ is defined up to a summand
which is 0 modulo $2^M$; that is, if a T-function $f\colon\mathbb Z_2\to\mathbb
Z_2$ is uniformly differentiable
modulo $2^M$ then its derivative modulo $2^M$ is a map from $\mathbb Z_2$
into the residue ring $\mathbb Z/2^M\mathbb Z$. Furthermost, it can be proved
(see \cite{AnKhr}) that a derivative modulo $2^m$ is a periodic function with
a period of length $2^{N_M(f)}$. Thus we state
\begin{proposition}[derivatives modulo $2^M$]
\label{prop:diff_mod}
If a T-function $f$ is uniformly differentiable modulo $2^M$, then its derivative
modulo $2^M$   is a periodic function with
a period of length $2^{N_M(f)}$; so the derivative can be considered as a map from the residue ring $\mathbb Z/2^{N_M(f)}\mathbb
Z$ to the  residue ring $\mathbb Z/2^M\mathbb Z$.
\end{proposition}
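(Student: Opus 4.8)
The plan is to show that $f'$ reduced modulo $2^M$ satisfies $f'(x+2^{K})\equiv f'(x)\pmod{2^M}$ for every $x\in\mathbb Z_2$, where I abbreviate $K=N_M(f)$; this is exactly periodicity of period $2^{K}$. Once it is established, the final assertion is immediate: by the remark preceding the proposition the derivative modulo $2^M$ already takes values in $\mathbb Z/2^M\mathbb Z$, and periodicity of period $2^{K}$ means that its value depends only on the residue $x\bmod 2^{K}$, so it descends to a well-defined map $\mathbb Z/2^{K}\mathbb Z\to\mathbb Z/2^M\mathbb Z$.

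The key device is to exploit the fact that uniform differentiability modulo $2^M$ supplies \emph{one} constant $K=N_M(f)$ that works simultaneously for \emph{all} base points. I would apply the defining congruence
\[
f(y+h)\equiv f(y)+f'(y)\cdot h\pmod{2^{\ord_2 h+M}}
\]
in three configurations, all with $\ord_2 h\ge K$ so that the congruence is valid: (i) base point $y=x$ with increment $h=2^{K}$, giving a statement modulo $2^{K+M}$; (ii) base point $y=x+2^{K}$ with increment $h=2^{K}$, again modulo $2^{K+M}$; and (iii) base point $y=x$ with increment $h=2^{K+1}$, giving a statement modulo $2^{K+1+M}$, which I immediately reduce modulo $2^{K+M}$. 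Configuration (ii) is the step that genuinely needs uniformity, since it evaluates the derivative at the shifted point $x+2^{K}$.

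The computation then compares two expressions for $f(x+2^{K+1})$ modulo $2^{K+M}$. Substituting (i) into (ii) expresses $f(x+2^{K+1})$ as $f(x)+\bigl(f'(x)+f'(x+2^{K})\bigr)\cdot 2^{K}$, while (iii) expresses it as $f(x)+f'(x)\cdot 2^{K+1}$; equating these modulo $2^{K+M}$ and cancelling $f(x)$ yields $f'(x+2^{K})\cdot 2^{K}\equiv f'(x)\cdot 2^{K}\pmod{2^{K+M}}$. The one point requiring care—and the only real obstacle—is the final cancellation of the common factor $2^{K}$: since $2^{K+M}\mid\bigl(f'(x+2^{K})-f'(x)\bigr)2^{K}$ forces $2^{M}\mid f'(x+2^{K})-f'(x)$, the modulus drops from $2^{K+M}$ to $2^{M}$ and we obtain $f'(x+2^{K})\equiv f'(x)\pmod{2^M}$, as required. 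All the rest is modulus bookkeeping, the main subtlety being to carry every congruence at the common precision $2^{K+M}$ before combining them.
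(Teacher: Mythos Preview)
Your argument is correct: the three applications of the defining congruence at increments $2^{K}$, $2^{K}$, and $2^{K+1}$ combine exactly as you describe, and the cancellation of the common factor $2^{K}$ is legitimate since the quantities involved are $2$-adic integers. The paper itself does not give a proof of this proposition---it simply cites the monograph \cite{AnKhr}---so there is no in-paper argument to compare against; your self-contained computation is a clean way to establish the result.
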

Rules of derivation modulo $2^M$ are of a similar form to that of the classical
case; however, they are congruences modulo $2^M$ rather than equalities.
\begin{example}
The T-function $f(x)=x\oplus(-1/3)$ is uniformly differentiable modulo $2^M$
if and only if $M=1$; its derivative modulo $2$ is 1, and $N_2(f)=1$. If $M>1$ then $f$ is differentiable modulo $2^M$ at
no point. 
\end{example}
From Definition \ref{def:diff_mod} it immediately  follows that
\begin{itemize}
\item if a T-function
is differentiable modulo $2^{M+1}$ then it is uniformly differentiable
modulo $2^M$;
\item a  T-function is uniformly differentiable iff it is uniformly differentiable modulo
$2^M$ for all $M\in\mathbb N$.
\end{itemize}
Thus, we have the following hierarchy of classes of uniform differentiability:
$$
\mathfrak D_1\supset\mathfrak D_2\supset\mathfrak D_3\supset\cdots\supset\mathfrak
D_\infty,
$$
where
$\mathfrak D_i$ is the class of all T-functions that are uniformly differentiable
modulo $2^i$, $i=1,2,3,\ldots$, and $\mathfrak D_\infty$ is a class of all uniformly differentiable T-functions.
It turns out that \emph{the T-functions of most interest to cryptography, the ones
that are invertible, all lie in $\mathfrak D_1$; that is, they all are uniformly differentiable modulo 2}.
\subsection{Differentiability, invertibility and single cycle property}
Given $n\in\mathbb N$, a T-function $f\colon\mathbb Z_2\to\mathbb Z_2$ is said to be \emph{bijective
modulo $2^n$} iff it is invertible on $n$-bit words; that is, iff the reduced
map $f\bmod 2^n\colon\mathbb Z/2^n\mathbb Z\to\mathbb Z/2^n\mathbb Z$ is
a permutation on the residue ring $\mathbb Z/2^n\mathbb Z$. Similarly, a
T-function $f\colon\mathbb Z_2\to\mathbb Z_2$ is said to be \emph{transitive
modulo $2^n$} iff it is a single cycle on $n$-bit words; that is, iff the reduced
map $f\bmod 2^n\colon\mathbb Z/2^n\mathbb Z\to\mathbb Z/2^n\mathbb Z$ is
a permutation on the residue ring $\mathbb Z/2^n\mathbb Z$ with the only
cycle (hence, with the cycle of length $2^n$).
\begin{definition}
\label{def:trans}
We say that a T-function $f\colon\mathbb Z_2\to\mathbb Z_2$ is bijective
iff it is bijective modulo $2^n$ for all $n\in\mathbb N$; we say that $f$
is transitive iff $f$ is transitive modulo $2^n$ for all $n\in\mathbb N$.
\end{definition}
Actually the above definition is a theorem that is proved in the $p$-adic
ergodic theory: transitive T-functions are exactly 1-Lipschitz ergodic transformations
on $\mathbb Z_2$, whereas bijective T-functions are measure-preserving isometries
of $\mathbb Z_2$ (see \cite{AnKhr}). For not to overload the paper we are not going to give a deeper look into the $p$-adic ergodic theory; within the
scope of the paper the above definition is sufficient. The point is that
for some T-functions bijectivity (resp., transitivity) modulo $2^n$ for \emph{some}
$n\in\mathbb N$ implies their bijectivity (resp., transitivity); that is,
under certain conditions, if a T-function is  invertible (resp., has a single
cycle property) on $n$-bit words for \emph{some}
$n\in\mathbb N$, then it is bijective (resp, transitive) invertible (resp., has a single
cycle property) on $n$-bit words  for \emph{all} $n\in\mathbb N$. For
proofs of rest claims of the section readers are referred to monograph \cite{AnKhr}.
\begin{proposition}
\label{prop:bij_u-diff}
If a T-function $f\colon\mathbb Z_2\to\mathbb Z_2$ is bijective then it is
uniformly differentiable modulo 2 and its derivative modulo 2 is 1 everywhere:
$f^\prime_2(x)\equiv 1\pmod 2$ for all $x\in\mathbb Z_2$ \textup{(equivalently, for
all $x\in\mathbb Z/2^{N_1(f)}\mathbb Z$)}.
\end{proposition}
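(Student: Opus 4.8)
The plan is to verify the defining congruence of the derivative modulo $2$ directly, reducing the whole statement to one parity fact obtained at each 2-adic level by playing compatibility against invertibility one bit higher. Concretely, for $h$ with $\ord_2 h = k$ the congruence to be established (Definition~\ref{def:diff_mod} with $M=1$) reads $f(x+h)\equiv f(x)+f^\prime_2(x)\,h\pmod{2^{k+1}}$, and since $\ord_2 h=k$ this amounts to determining the single bit $\frac{f(x+h)-f(x)}{2^k}\bmod 2$. So it suffices to show that this bit equals $1$ for every $x\in\mathbb Z_2$, every $k\in\mathbb N_0$, and every $h$ with $\ord_2 h=k$.

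First I would treat the basic increment $h=2^k$. By compatibility~\eqref{eq:compat}, from $x\equiv x+2^k\pmod{2^k}$ we get $f(x)\equiv f(x+2^k)\pmod{2^k}$, so $2^k\mid f(x+2^k)-f(x)$. On the other hand $f$ is bijective, hence bijective modulo $2^{k+1}$; as $x\not\equiv x+2^k\pmod{2^{k+1}}$, injectivity of $f\bmod 2^{k+1}$ forces $f(x)\not\equiv f(x+2^k)\pmod{2^{k+1}}$, i.e. $2^{k+1}\nmid f(x+2^k)-f(x)$. A nonzero residue modulo $2^{k+1}$ that is divisible by $2^k$ must equal $2^k$, whence
\[
 f(x+2^k)\equiv f(x)+2^k\pmod{2^{k+1}}.
\]

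To pass to an arbitrary $h$ of valuation $k$, write $h=2^k s$ with $s$ odd; then $h\equiv 2^k\pmod{2^{k+1}}$, and compatibility~\eqref{eq:compat} together with the displayed relation gives $f(x+h)\equiv f(x+2^k)\equiv f(x)+2^k\equiv f(x)+h\pmod{2^{k+1}}$. This is exactly the congruence of Definition~\ref{def:diff_mod} with coefficient $1$, so $f$ is differentiable modulo $2$ at every $x$ with $f^\prime_2(x)\equiv 1\pmod 2$; moreover the argument is valid for all $k\ge 0$ with a threshold on $\ord_2 h$ that does not depend on $x$, so a single $K$ works uniformly in Definition~\ref{def:unidiff_mod} and $f$ is uniformly differentiable modulo $2$. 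The reformulation over $\mathbb Z/2^{N_1(f)}\mathbb Z$ is then immediate from Proposition~\ref{prop:diff_mod}. The one genuinely substantive point is the opening move of the second paragraph: the parity is pinned down not by invertibility at level $k$ but by invertibility at level $k+1$ coupled with compatibility at level $k$; once this coupling is seen, the rest is routine manipulation with odd multipliers.
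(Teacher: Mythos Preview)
Your argument is correct. The paper itself does not prove Proposition~\ref{prop:bij_u-diff}; it defers the proofs of this and the surrounding claims to the monograph~\cite{AnKhr}, so there is no in-paper proof to compare against directly. Your approach---pinning down the single bit $\delta_k(f(x+2^k)-f(x))$ by coupling compatibility at level $k$ with injectivity of $f\bmod 2^{k+1}$, then absorbing an arbitrary odd multiplier via compatibility one level up---is the natural elementary one and in fact shows $N_1(f)=1$ for every bijective T-function. An alternative route available from material the paper does state is to invoke Theorem~\ref{thm:expl}: every bijective T-function has the form $f(x)=c+x+2g(x)$ for some T-function $g$, whence $f(x+h)-f(x)=h+2\bigl(g(x+h)-g(x)\bigr)\equiv h\pmod{2^{\ord_2 h+1}}$ by compatibility of $g$. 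The two computations are essentially the same; your version has the virtue of not leaning on a structural characterization that is itself quoted without proof.
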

\begin{theorem}
Let a T-function $f$ be uniformly differentiable modulo 2.  Then
$f$ is bijective iff $f$ is bijective modulo $2^{N_1(f)}$ and $f^\prime_2(x)\equiv 1\pmod 2$ everywhere. Equivalently: if and only if $f$ is bijective
modulo $2^{N_1(f)+1}$.
\end{theorem}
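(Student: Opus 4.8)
The plan is to prove both stated equivalences at once by running a cycle of implications through three conditions: (A) $f$ is bijective; (B) $f$ is bijective modulo $2^{N_1(f)}$ and $f^\prime_2(x)\equiv 1\pmod 2$ for every $x$; and (C) $f$ is bijective modulo $2^{N_1(f)+1}$. I would establish (A)$\Rightarrow$(C)$\Rightarrow$(B)$\Rightarrow$(A). First I would record one elementary fact, the \emph{downward stability} of bijectivity: if $f$ is bijective modulo $2^{n}$ then it is bijective modulo $2^{m}$ for every $m\le n$. This holds because compatibility of $f$ makes the natural reduction $\rho\colon\mathbb Z/2^{n}\mathbb Z\to\mathbb Z/2^{m}\mathbb Z$ intertwine $f\bmod 2^{n}$ with $f\bmod 2^{m}$, so surjectivity of $\rho$ and of $f\bmod 2^{n}$ forces $f\bmod 2^{m}$ to be surjective, hence bijective on a finite set. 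Given this, (A)$\Rightarrow$(C) is immediate from Definition \ref{def:trans}.

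The heart of the argument would be a single \emph{lifting step}: for every $n\ge N_1(f)$, the map $f$ is bijective modulo $2^{n+1}$ if and only if it is bijective modulo $2^{n}$ and $f^\prime_2(x)\equiv 1\pmod 2$ for all $x$. To see this I would note that each class $a\bmod 2^{n}$ has exactly the two lifts $a$ and $a+2^{n}$ modulo $2^{n+1}$, and by compatibility $f(a+2^{n})\equiv f(a)\pmod{2^{n}}$, so their images share one class modulo $2^{n}$. Thus $f\bmod 2^{n+1}$ is bijective exactly when $f\bmod 2^{n}$ is bijective \emph{and} the two lifts stay separated, i.e. $f(a+2^{n})\not\equiv f(a)\pmod{2^{n+1}}$ for every $a$. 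Here I would invoke uniform differentiability modulo $2$ (Definition \ref{def:unidiff_mod}) with $h=2^{n}$: since $n\ge N_1(f)$ we have $h\equiv 0\pmod{2^{N_1(f)}}$ and $\ord_2 h=n$, giving
$$f(a+2^{n})\equiv f(a)+f^\prime(a)\cdot 2^{n}\pmod{2^{n+1}}.$$
Because $f^\prime(a)\cdot 2^{n}\bmod 2^{n+1}$ depends only on $f^\prime_2(a)=f^\prime(a)\bmod 2$, the separation condition becomes \emph{exactly} $f^\prime_2(a)\equiv 1\pmod 2$; and since $f^\prime_2$ has period $2^{N_1(f)}$ by Proposition \ref{prop:diff_mod}, imposing it for all $a$ is the same as requiring $f^\prime_2(x)\equiv 1$ for every $x\in\mathbb Z_2$.

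With the lifting step in hand the cycle closes quickly. For (C)$\Rightarrow$(B) I would use downward stability to get bijectivity modulo $2^{N_1(f)}$, and the case $n=N_1(f)$ of the lifting step to force $f^\prime_2\equiv 1$ everywhere. For (B)$\Rightarrow$(A) I would induct upward from $n=N_1(f)$: the base case is the bijectivity modulo $2^{N_1(f)}$ assumed in (B), each inductive step is exactly the lifting step (its hypothesis $f^\prime_2\equiv 1$ being supplied by (B)), and downward stability disposes of the finitely many $m<N_1(f)$, yielding bijectivity modulo $2^{n}$ for all $n$, which is (A).

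The one genuinely delicate point is the lifting step, and within it the hypothesis $n\ge N_1(f)$: this is precisely what licenses the uniform-differentiability congruence, which is guaranteed only for increments $h$ divisible by $2^{N_1(f)}$, and it explains why the threshold in the theorem is $N_1(f)$ rather than $1$. Everything else — downward stability, the trivial (A)$\Rightarrow$(C), and the periodicity drawn from Proposition \ref{prop:diff_mod} — I expect to be routine, so the lifting step is the only place where care is required.
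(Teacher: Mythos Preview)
The paper does not actually prove this theorem: immediately before stating Proposition~\ref{prop:bij_u-diff} and the theorem in question, the paper says ``For proofs of rest claims of the section readers are referred to monograph \cite{AnKhr},'' and indeed no proof is supplied in the text. So there is no in-paper argument to compare against.

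That said, your proof is correct and is the standard Hensel-lifting argument one would give. The cycle (A)$\Rightarrow$(C)$\Rightarrow$(B)$\Rightarrow$(A) is cleanly organized, the downward-stability observation is exactly right, and the lifting step is the genuine content: each residue $a\bmod 2^n$ has two lifts $a,\,a+2^n$ modulo $2^{n+1}$, compatibility forces their images to agree modulo $2^n$, and the uniform-differentiability congruence (valid because $n\ge N_1(f)$) reduces the separation $f(a+2^n)\not\equiv f(a)\pmod{2^{n+1}}$ to $f^\prime_2(a)\equiv 1\pmod 2$. Your use of Proposition~\ref{prop:diff_mod} to pass from ``for all residues $a$ modulo $2^n$'' to ``for all $x\in\mathbb Z_2$'' via the $2^{N_1(f)}$-periodicity of $f^\prime_2$ is also correct, and you correctly identify the threshold $n\ge N_1(f)$ as the point where the differentiability congruence is guaranteed. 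Nothing is missing.
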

\begin{theorem}
\label{thm:erg}
Let a T-function $f$ be uniformly differentiable modulo 4.  Then
$f$ is transitive iff $f$ is transitive modulo $2^{N_2(f)+2}$.
\end{theorem}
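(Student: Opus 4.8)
The plan is to prove the nontrivial implication; the forward direction (transitivity $\Rightarrow$ transitivity modulo $2^{N_2(f)+2}$) is immediate from Definition \ref{def:trans}. Write $N=N_2(f)$ and assume $f$ is transitive modulo $2^{N+2}$. The whole argument is a lifting induction: I will show that for every $n\ge N+2$, transitivity modulo $2^n$ forces transitivity modulo $2^{n+1}$. Since $f$ is transitive modulo $2^{N+2}$ by hypothesis, and transitivity modulo $2^{m}$ trivially implies transitivity modulo $2^{m'}$ for $m'\le m$, this yields transitivity modulo $2^n$ for all $n$, i.e. transitivity of $f$.

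Before the induction I first record two preliminary facts. (i) Since the congruence defining uniform differentiability modulo $4$ (Definition \ref{def:unidiff_mod}) is stronger than the one modulo $2$, the witness $K=N$ also works modulo $2$, so $N_1(f)\le N$; consequently $f$ is bijective modulo $2^{N_1(f)+1}$ (it is even bijective modulo $2^{N+2}$), and the bijectivity criterion stated just above Theorem \ref{thm:erg} shows that $f$ is globally bijective. Proposition \ref{prop:bij_u-diff} then gives $f'(x)\equiv 1\pmod 2$ for every $x$, so the derivative is a unit everywhere. (ii) By Proposition \ref{prop:diff_mod} the derivative modulo $4$ is periodic with period $2^N$, so the bit $\delta_1(f'(x))$ depends only on $x\bmod 2^N$.

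The core of the induction is the classical single-cycle lifting criterion together with a perturbation-of-iterate computation. Fix $n\ge N+2$ and abbreviate $S_k=f^k(0)$. Transitivity modulo $2^n$ forces $S_{2^{n-1}}\equiv 2^{n-1}\pmod{2^n}$ and $S_{2^n}\equiv 0\pmod{2^n}$; moreover $f$ is transitive modulo $2^{n+1}$ iff the cycle of $0$ does not close up early, i.e. iff $\delta_n(S_{2^n})=1$. To evaluate $S_{2^n}\bmod 2^{n+1}$ I will establish, by induction on the number of iterations, the perturbation formula
\[
f^{2^{n-1}}(\delta)\equiv f^{2^{n-1}}(0)+\Big(\prod_{k=0}^{2^{n-1}-1}f'(S_k)\Big)\,\delta\pmod{2^{n+1}},\qquad \mathrm{ord}_2\delta=n-1,
\]
which is valid because $\mathrm{ord}_2\delta=n-1\ge N$ lets me apply uniform differentiability modulo $4$ at each step, and because $f'$ is a unit (fact (i)) the perturbation keeps $2$-adic order exactly $n-1$ along the whole orbit. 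Applying this with $\delta=S_{2^{n-1}}$ gives $S_{2^n}\equiv(1+P)\,S_{2^{n-1}}\pmod{2^{n+1}}$, where $P=\prod_{k=0}^{2^{n-1}-1}f'(S_k)$.

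It remains to pin down $1+P\bmod 4$. Writing each unit as $f'(S_k)\equiv 1+2\delta_1(f'(S_k))\pmod 4$ gives $P\equiv 1+2\sum_{k=0}^{2^{n-1}-1}\delta_1(f'(S_k))\pmod 4$. Here fact (ii) enters: since $\delta_1(f'(\cdot))$ depends only on the argument modulo $2^N$ and $f$ is transitive modulo $2^N$, the orbit $S_0,\ldots,S_{2^{n-1}-1}$ runs through every residue modulo $2^N$ exactly $2^{\,n-1-N}$ times, so the sum equals $2^{\,n-1-N}\sum_{a\bmod 2^N}\delta_1(f'(a))$. For $n\ge N+2$ the factor $2^{\,n-1-N}$ is even, hence $P\equiv 1\pmod 4$ and $1+P\equiv 2\pmod 4$. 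Since $S_{2^{n-1}}$ is an odd multiple of $2^{n-1}$, the product $(1+P)S_{2^{n-1}}$ is an odd multiple of $2^n$, i.e. $S_{2^n}\equiv 2^n\pmod{2^{n+1}}$ and $\delta_n(S_{2^n})=1$, which finishes the inductive step and hence the proof. The main obstacle I anticipate is the perturbation-of-iterate formula: one must check carefully that $f'$ being a unit keeps every intermediate perturbation at order exactly $n-1$, so that uniform differentiability modulo $4$ (rather than some higher modulus) suffices at each of the $2^{n-1}$ steps; the averaging argument via periodicity of the derivative modulo $4$ is then the payoff that makes the obstruction vanish.
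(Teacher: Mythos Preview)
The paper does not supply its own proof of Theorem \ref{thm:erg}; it refers the reader to the monograph \cite{AnKhr} (see the sentence preceding Proposition \ref{prop:bij_u-diff}). Your lifting argument is correct and is the standard proof: the perturbation-of-iterate formula is valid exactly as you state (the unit-derivative check keeps each intermediate perturbation at order $n-1\ge N$, so uniform differentiability modulo $4$ applies at every step), and the single-cycle lifting criterion $\delta_n(S_{2^n})=1$ follows since $f$ is already known to be globally bijective.

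Your averaging step is in fact a small simplification over what the paper records as Lemma \ref{le:prod} (itself quoted from \cite{AnKhr}). Lemma \ref{le:prod} asserts that already the product over one fundamental block of $2^{N}$ iterates is $\equiv 1\pmod 4$; you only need the product over $2^{n-1}\ge 2^{N+1}$ iterates, and there the even multiplicity factor $2^{\,n-1-N}$ kills the sum $\sum\delta_1(f'(S_k))$ outright, with no information needed about $\sum_{a\bmod 2^N}\delta_1(f'(a))$. The two routes are essentially the same argument; yours is the more economical version tailored to exactly what the induction requires.
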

\begin{example}
The Klimov-Shamir T-function $f(x)=x+(x^2\vee 5)$ is transitive.
\end{example}
\begin{proof}
Indeed,  $f$ is uniformly differentiable, $N_2(f)=2$; so it suffices to check
whether the residues modulo 16 of  $0, f(0), f^2(0)=f(f(0)),\ldots, f^{15}(0)$
are all different. This can readily be verified by direct calculations.
\end{proof}
It worth noting here that all transitive (as well as all bijective) T-functions can
be represented in a certain `explicit' form:
\begin{theorem}[\protect{\cite{me:2}, also \cite[Theorem 4.44]{AnKhr}}]
\label{thm:expl}
~
\begin{itemize}
\item A T-function $f\colon\mathbb Z_2\to\mathbb Z_2$ is bijective if
and only if it is of the form $f(x)=c+x+2g(x)$, where $g$ is an arbitrary T-function, $c\in\{0,1\}$.
\item A T-function $f\colon\mathbb Z_2\to\mathbb Z_2$ is transitive if
and only if it is of the form $f(x)=1+x+2(g(x+1)-g(x))$, where $g$ is an arbitrary T-function.
\end{itemize}
\end{theorem}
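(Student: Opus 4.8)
The plan is to reduce everything to the digit-level structure of T-functions together with the reduction-modulo-$2^n$ characterizations of bijectivity and transitivity from Definition~\ref{def:trans}. For the first claim, I would first do the easy direction: given $f(x)=c+x+2g(x)$ with $g$ a T-function and $c\in\{0,1\}$, write $x=\sum_i\chi_i2^i$ and compute bit $j$ of $f(x)$ directly. Since $2g(x)$ contributes only its $(j-1)$-th bit to position $j$, and that bit — together with every carry entering position $j$ — depends only on $\chi_0,\dots,\chi_{j-1}$ (because $g$ is compatible, see \eqref{eq:compat}), bit $j$ of $f(x)$ has the form $\chi_j\oplus\theta_j(\chi_0,\dots,\chi_{j-1})$, while bit $0$ is $\chi_0\oplus c$. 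A coordinate function of this shape is, with the lower bits fixed, a bijection in $\chi_j$, so $f\bmod 2^n$ is a permutation for every $n$, i.e.\ $f$ is bijective.

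For the converse of the first claim I would invoke the standard fact (proved by induction on $n$ using the fiber-lifting argument for $f\bmod 2^{n+1}$ over $f\bmod 2^n$) that bijectivity forces every coordinate function into the form $\psi_j=\chi_j\oplus\theta_j(\chi_0,\dots,\chi_{j-1})$ with $\theta_0\in\{0,1\}$ constant. Then I set $c=\theta_0$ and define $g(x)=\bigl(f(x)-x-c\bigr)/2$; compatibility at bit $0$ shows $f(x)-x\equiv c\pmod 2$, so the division is legitimate. \textbf{The main obstacle here is showing $g$ is again $1$-Lipschitz}, since halving naively costs a factor of $2$. I would settle this by a carry analysis: if $u\equiv v\pmod{2^s}$ then $f(u),f(v)$ agree on bits $0,\dots,s-1$, and at bit $s$ the two $\theta_s$-terms cancel so $\psi_s(u)\oplus\psi_s(v)=\chi_s(u)\oplus\chi_s(v)$; feeding this into the subtraction gives $f(u)-u\equiv f(v)-v\pmod{2^{s+1}}$, whence $g(u)\equiv g(v)\pmod{2^s}$.

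For the transitivity claim, transitive implies bijective, so by the first part $f=c+x+2h$; reducing mod $2$ shows $f\equiv c+x$, and a single cycle on $\{0,1\}$ forces $c=1$. I would then derive the orbit-sum identity $f^k(0)=k+2\sum_{i=0}^{k-1}h\!\left(f^i(0)\right)$ by induction, and combine it with the lifting fact that a permutation of $\mathbb Z/2^{n+1}\mathbb Z$ covering a single $2^n$-cycle is itself a single $2^{n+1}$-cycle iff $f^{2^n}(0)\equiv 2^n\pmod{2^{n+1}}$. Because $\{f^i(0)\bmod 2^n:i<2^n\}$ runs through all residues and $h$ is compatible, $\sum_{i<2^n}h(f^i(0))\equiv\sum_{x=0}^{2^n-1}h(x)\pmod{2^n}$, yielding the clean criterion: $f$ is transitive iff $\sum_{x=0}^{2^n-1}h(x)\equiv 0\pmod{2^n}$ for all $n$.

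Finally I would match this criterion to the claimed form. For the ``if'' direction, with $h(x)=g(x+1)-g(x)$ the sum telescopes to $g(2^n)-g(0)$, which is $\equiv 0\pmod{2^n}$ by compatibility of $g$, so the criterion holds and $f$ is transitive. For the ``only if'' direction I would set the antidifference $g(x)=\sum_{t=0}^{x-1}h(t)$ on $\mathbb N_0$, so that $g(x+1)-g(x)=h(x)$ by construction; the remaining and most delicate point is to check that \emph{this} $g$ is $1$-Lipschitz on $\mathbb N_0$, after which it extends uniquely to a T-function on $\mathbb Z_2$ and the identity $h=g(\cdot+1)-g(\cdot)$ persists by continuity. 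The Lipschitz check is exactly where the transitivity criterion is consumed: for $a\equiv b\pmod{2^s}$ the interval $[a,b)$ has length a multiple of $2^s$, so it splits into complete residue blocks modulo $2^s$, each contributing $\sum_{r=0}^{2^s-1}h(r)\equiv 0\pmod{2^s}$, giving $g(b)-g(a)\equiv 0\pmod{2^s}$. Thus both the division-by-$2$ step and the antidifference step are the real work, and in both cases the factor of $2$ that threatens the Lipschitz constant is recovered from the same residue-block cancellation.
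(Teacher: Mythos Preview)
The paper does not actually prove Theorem~\ref{thm:expl}; it is quoted from \cite{me:2} and \cite[Theorem~4.44]{AnKhr} as a known result, with no argument given in the present text. So there is no ``paper's proof'' to compare against.

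Your proposal is a correct self-contained proof. The bijectivity part is handled cleanly: the coordinate-function shape $\psi_j=\chi_j\oplus\theta_j(\chi_0,\dots,\chi_{j-1})$ is exactly the digit-level form of the paper's Proposition~\ref{prop:bij_u-diff}, and your carry argument for the $1$-Lipschitz property of $g=(f-x-c)/2$ is right (the key point being that since bits $0,\dots,s-1$ of $f(u)-f(v)$ and of $u-v$ vanish, no borrow reaches bit $s$, and there $\psi_s(u)\oplus\psi_s(v)=\chi_s(u)\oplus\chi_s(v)$). The transitivity part is also correct: the orbit-sum identity plus the standard cycle-lifting criterion give the characterization $\sum_{x=0}^{2^n-1}h(x)\equiv 0\pmod{2^n}$ for all $n$, the telescoping verifies the ``if'' direction, and the block-decomposition argument for the antidifference $g(x)=\sum_{t<x}h(t)$ correctly recovers the $1$-Lipschitz bound precisely because each complete residue block modulo $2^s$ contributes $\sum_{r<2^s}h(r)\equiv 0\pmod{2^s}$. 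The extension of $g$ from $\mathbb N_0$ to $\mathbb Z_2$ by density and uniform continuity is standard. Nothing is missing.
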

\subsection{Properties of coordinate sequences} Given a transitive T-function
$f\colon\mathbb Z_2\to\mathbb Z_2$ and a 2-adic integer $x_0\in\mathbb Z_2$,
consider $i$-th  coordinate sequence $(\delta_i(f^j(x_0))_{j=0}^\infty$. The sequence satisfies recurrence
relation \eqref{eq:half-per}; that is, the  second half of the period
of the $i$-th coordinate sequence is a bitwise negation of the first half;
so the  shortest period  (which is of length  $2^{i+1}$) of the sequence
is completely determined by its  first $2^i$ bits. It turns out that given
\emph{arbitrary} T-function $f$, the first half's of periods of coordinate
sequences should be considered as independent, in the following meaning:
\begin{theorem}[The independence of coordinate sequences]
\label{thm:half-per}
 Given a set $\mathcal S_0,\mathcal S_1, \mathcal S_2, \ldots$ of binary sequences $\mathcal S_i=(\zeta_j)_{j=0}^{2^i-1}$ of length $2^i$, $i=0,1,2,\ldots$, there exists a transitive T-function
$f$ and a 2-adic integer $x_0\in\mathbb Z_2$
such that each first half of each $i$-th coordinate sequence is the sequence $\mathcal
S_i$, $i=0,1,2,\ldots$:
$$
\delta_i(f^j(x_0))=\zeta_j, \quad \text{for all}\ j=0,1,\ldots,2^i-1.
$$
\end{theorem}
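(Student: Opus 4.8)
The plan is to construct $f$ and $x_0$ simultaneously, determining the coordinate functions $\psi_0,\psi_1,\ldots$ of $f$ one at a time by induction on the modulus $2^n$, so that at every stage $f\bmod 2^{n+1}$ is a single cycle whose orbit $x_j=f^j(x_0)$ realizes the prescribed half-period $\mathcal S_n$. First I would observe that $x_0$ is in fact forced: evaluating the required identity at $j=0$ gives $\delta_i(x_0)=(\mathcal S_i)_0$ for every $i$, so one must take $x_0=\sum_i(\mathcal S_i)_0\cdot 2^i$, and only $f$ remains to be built. For the base case, transitivity modulo $2$ forces $\psi_0(\chi_0)=\chi_0\oplus 1$ (the unique single cycle on $\{0,1\}$), after which $\delta_0(x_0)=(\mathcal S_0)_0$ already holds.

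The heart of the argument is a lifting lemma for the inductive step. Suppose $f\bmod 2^n$ is a single cycle and write $r_j=x_j\bmod 2^n$, so that $r_0,r_1,\ldots,r_{2^n-1}$ runs through all of $\mathbb Z/2^n\mathbb Z$. Any compatible extension to $\mathbb Z/2^{n+1}\mathbb Z$ prescribes the new top bit $\delta_n(f(x))=\psi_n(\delta_0(x),\ldots,\delta_n(x))$; bijectivity modulo $2^{n+1}$ holds iff flipping the top input bit flips $\delta_n(f(x))$, i.e. iff $\delta_n(f(x))=\delta_n(x)\oplus e(x\bmod 2^n)$ for some $e\colon\mathbb Z/2^n\mathbb Z\to\mathbb F_2$. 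Tracking the top bit along the orbit then yields the telescoping formula
$$
\delta_n(x_j)=\delta_n(x_0)\oplus\bigoplus_{k=0}^{j-1}e(r_k),
$$
and the half-period relation \eqref{eq:half-per} (equivalently, single-cycleness of the lift) holds precisely when $\bigoplus_{r\in\mathbb Z/2^n\mathbb Z}e(r)=1$.

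Granting this lemma, the inductive step is a short computation. Writing $\mathcal S_n=(\zeta_0,\ldots,\zeta_{2^n-1})$, the requirement $\delta_n(x_j)=\zeta_j$ together with the telescoping formula forces $e(r_{j-1})=\zeta_j\oplus\zeta_{j-1}$ for $j=1,\ldots,2^n-1$; this defines $e$ on all residues except $r_{2^n-1}$. The one remaining value is pinned down by the single-cycle parity condition, and since $\bigoplus_{j=1}^{2^n-1}(\zeta_j\oplus\zeta_{j-1})=\zeta_{2^n-1}\oplus\zeta_0$ telescopes, that condition reads $e(r_{2^n-1})=1\oplus\zeta_0\oplus\zeta_{2^n-1}$, which is consistent and uses none of the prescribed first-half values. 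This determines $\psi_n$, completes the single-cycle lift, and makes $\delta_n(x_0),\ldots,\delta_n(x_{2^n-1})$ equal to $\mathcal S_n$. Collecting all the $\psi_n$ yields a genuine T-function $f\colon\mathbb Z_2\to\mathbb Z_2$ that is transitive modulo $2^n$ for every $n$, hence transitive by Definition \ref{def:trans}, whose orbit of $x_0$ has the prescribed half-periods.

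I expect the only real obstacle to be the lifting lemma, specifically the equivalence between single-cycleness of the lift and the parity condition $\bigoplus_r e(r)=1$: one must check that bijectivity modulo $2^{n+1}$ alone does \emph{not} force a single cycle (the lift may instead split into two $2^n$-cycles), and that the parity of the accumulated top-bit flips over one base period is exactly what merges the two lifts of each residue into one $2^{n+1}$-cycle. The remaining steps — forcing $x_0$, the telescoping solution for $e$, and the compatibility of the parity constraint with the prescription (which holds because that constraint touches only the wrap-around value $e(r_{2^n-1})$) — are routine.
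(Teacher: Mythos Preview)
Your proof is correct. The paper itself does not supply a proof of Theorem~\ref{thm:half-per}: it is one of several background results in Section~\ref{sec:T-non_A} for which the reader is referred to the monograph \cite{AnKhr}, so there is no in-paper argument to compare against directly.

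That said, your construction is the standard one and fits neatly with the tools the paper does state. Your lifting lemma---that a bijective lift of a transitive $f\bmod 2^n$ must have $\psi_n$ of the form $\delta_n(x)\oplus e(x\bmod 2^n)$, and that the lift is again transitive iff $\bigoplus_r e(r)=1$---is essentially the content of Proposition~\ref{prop:bij_u-diff} (bijectivity forces derivative $\equiv 1\pmod 2$, i.e.\ linearity in the top input bit) combined with relation~\eqref{eq:half-per}. It is also equivalent to the explicit parametrization in Theorem~\ref{thm:expl}, which writes every transitive T-function as $f(x)=1+x+2(g(x{+}1)-g(x))$ for an arbitrary T-function $g$: choosing $g$ bit by bit is exactly the freedom you exploit when choosing $e$ on all residues but one. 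The remaining steps---forcing $x_0$ from the $j=0$ values, solving for $e$ by telescoping, and using the single undetermined value $e(r_{2^n-1})$ to enforce the parity condition---are, as you say, routine and correctly carried out.
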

The essence of our contribution is that  \emph{coordinate sequences of
a transitive T-function that is uniformly differentiable
modulo 4 are not independent any longer: there are linear relations among
them}.
\section{Main results: statements}
\label{sec:main-st}
Given a transitive T-function $f\colon\mathbb Z_2\to\mathbb Z_2$ and the
initial state $x_0\in\mathbb Z_2$, for $i=0,1,2,\ldots$ denote $x_i=f^i(x_0)$,
$\chi_n^i=\delta_i(f^n(x_0))$, the $n$-th digit in the canonical 2-adic expansion
of the $n$-th iterate of $x_0$. That is, the binary sequence $(\chi_n^i)_{i=0}^\infty$
is the $n$-th coordinate sequence of the recurrence sequence determined by the recurrence
law $x_{i+1}=f(x_i)$.
\subsection{Linear relation}
Our first result yields that if a transitive T-function is uniformly differentiable
modulo 4 then \emph{two} adjacent coordinate sequences satisfy linear relation of form
\eqref{eq.1}:
\begin{theorem}
\label{thm.11}
Let  a transitive T-function $f$ be uniformly differentiable
modulo $4$. Given $x_0\in\mathbb Z_2$, for all $n\geq N_{2}(f)+1$ the following congruence holds:
\begin{equation}
\label{eq:lin_coord}
\chi_n^{i+2^{n-1}}\equiv\chi_{n-1}^i+\chi_{n}^i+
\chi_{n-1}^0+\chi_n^0+\chi_n^{2^{n-1}}
+y(i)\pmod
2. \qquad (i=0,1,2,\ldots),
\end{equation}
The length of the
shortest period of the binary sequence $(y(i))_{i=0}^\infty$ 
is $2^K$,  $0\le K\le {N_{2}(f)}$.
Furthermost, $\gamma(i)$
does not depend on $n$.
\end{theorem}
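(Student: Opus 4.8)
The plan is to fix $n\ge N_2(f)+1$, set $g=f^{2^{n-1}}$ so that $x_{i+2^{n-1}}=g(x_i)$, and reduce the claim to an analysis of the single bit $\delta_n(g(x_i))=\chi_n^{i+2^{n-1}}$. I would \emph{define} $y(i)$ to be exactly the left-hand side of \eqref{eq:lin_coord} minus the remaining terms on its right, so that \eqref{eq:lin_coord} holds by construction and $y(0)=0$; the entire content of the theorem is then that $y$ has shortest period a power of $2$ bounded by $2^{N_2(f)}$ and is independent of $n$. Since $f$ is transitive it is bijective, hence $f(x)\equiv x+1\pmod2$, and we may write $f(x)=x+1+2\psi(x)$ for a T-function $\psi$; uniform differentiability modulo $4$ of $f$ passes to uniform differentiability modulo $2$ of $\psi$, with $\psi'(x)=\frac{f'(x)-1}{2}$ and $N_1(\psi)\le N_2(f)$.

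First I would telescope the iteration:
\begin{equation*}
g(x_i)-x_i=\sum_{j=0}^{2^{n-1}-1}\bigl(f(x_{i+j})-x_{i+j}\bigr)=2^{n-1}+2S_i,\qquad S_i:=\sum_{j=0}^{2^{n-1}-1}\psi(x_{i+j}).
\end{equation*}
Because $f$ is transitive modulo $2^{n-1}$, the arguments $x_i,\dots,x_{i+2^{n-1}-1}$ run through every residue modulo $2^{n-1}$. Reducing this identity modulo $2^{n+1}$ and carrying out the (routine but delicate) bookkeeping of carries into bit $n$ — using $\delta_{n-1}(x_{i+2^{n-1}})=\delta_{n-1}(x_i)+1$ and $x_{i+2^{n-1}}\equiv x_i\pmod{2^{n-1}}$, both coming from \eqref{eq:half-per} and transitivity modulo $2^{n-1}$ — should yield $\delta_n(g(x_i))\equiv\delta_n(x_i)+\delta_{n-1}(x_i)+\delta_{n-1}(S_i)\pmod2$. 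Writing $w_i:=\delta_{n-1}(S_i)$, this identifies $y(i)\equiv w_i+w_0\pmod2$.

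The key step is to compute the increment $w_{i+1}-w_i$ through the derivative. Here $S_{i+1}-S_i=\psi(x_{i+2^{n-1}})-\psi(x_i)$, and $h:=x_{i+2^{n-1}}-x_i$ satisfies $\ord_2 h=n-1\ge N_2(f)\ge N_1(\psi)$, so uniform differentiability modulo $2$ of $\psi$ gives $S_{i+1}-S_i\equiv\psi'(x_i)\,h\equiv 2^{n-1}e(x_i)\pmod{2^{n}}$, where $e(x):=\psi'(x)\bmod2$, i.e. $f'(x)\equiv1+2e(x)\pmod4$. As $S_{i+1}-S_i\equiv0\pmod{2^{n-1}}$, adding it alters bit $n-1$ by exactly $e(x_i)$, so $w_{i+1}-w_i\equiv e(x_i)\pmod2$ and hence $y(i)\equiv\sum_{l=0}^{i-1}e(x_l)\pmod2$. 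By Proposition \ref{prop:diff_mod} the derivative modulo $4$, and thus $e$, depends only on its argument modulo $2^{N_2(f)}$, while $x_l\bmod2^{N_2(f)}$ depends only on $l$ and $x_0$; so $y$ is manifestly independent of $n$, proving the last assertion.

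For the period, the sequence $(e(x_l))_l$ is periodic of period $2^{N_2(f)}$, and as $x_l$ exhausts all residues modulo $2^{N_2(f)}$ over one period we get $y(i+2^{N_2(f)})-y(i)\equiv T\pmod2$ with $T=\sum_{r=0}^{2^{N_2(f)}-1}e(r)$. The crux, and the main obstacle, is to show $T\equiv0\pmod2$, since this is precisely what keeps the period at $2^K$ with $K\le N_2(f)$ rather than $2^{N_2(f)+1}$. I would obtain it by computing $(f^{2^{N_2(f)}})'$ two ways: the chain rule modulo $4$ (legitimate because each factor $f'(x_l)$ is odd) gives $(f^{2^{N_2(f)}})'(x_0)\equiv\prod_{l}(1+2e(x_l))\equiv1+2T\pmod4$, whereas \eqref{eq:half-per} together with transitivity modulo $2^{N_2(f)}$ force $f^{2^{N_2(f)}}(x)\equiv x+2^{N_2(f)}\pmod{2^{N_2(f)+1}}$ for every $x$, whence $(f^{2^{N_2(f)}})'(x_0)\equiv1\pmod4$. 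Comparing the two yields $2T\equiv0\pmod4$, i.e. $T\equiv0\pmod2$, so $y$ has period dividing $2^{N_2(f)}$; being a partial-sum sequence modulo $2$ of a $2^{N_2(f)}$-periodic sequence, its shortest period is a power of $2$, which is the desired $2^K$ with $0\le K\le N_2(f)$.
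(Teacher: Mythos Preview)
Your overall strategy is sound and arrives at the same description of $y(i)$ as the paper does, namely $y(i)\equiv\sum_{l<i}\delta_1(f'(x_l))\pmod2$; the paper reaches this via $y(i)=\delta_1\bigl((f^i)'_2(x_0)\bigr)$ and the chain rule applied to $f^{i+2^{n-1}}=f^i\circ f^{2^{n-1}}$, whereas you obtain it from the telescoped sum $S_i$ and the increment $S_{i+1}-S_i=\psi(x_{i+2^{n-1}})-\psi(x_i)$. That is a legitimate and pleasant alternative decomposition. The carry bookkeeping you defer as ``routine but delicate'' is indeed correct once one observes that $x_{i+2^{n-1}}\equiv x_i+2^{n-1}\pmod{2^n}$ forces $S_i\equiv0\pmod{2^{n-1}}$, so that $2S_i\equiv 2^n\delta_{n-1}(S_i)\pmod{2^{n+1}}$ and the claimed bit identity follows.

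There is, however, a genuine gap in the final paragraph. From $f^{2^{N}}(x)\equiv x+2^{N}\pmod{2^{N+1}}$ (with $N=N_2(f)$) you infer $(f^{2^{N}})'(x_0)\equiv1\pmod4$, but this inference is not valid in general: two $1$-Lipschitz maps agreeing modulo $2^{N+1}$ need not have derivatives agreeing modulo $4$. Writing $F=f^{2^N}$ and $F(x)=x+2^N+2^{N+1}H(x)$, the Lipschitz bound on $F$ gives only $\|H(x+h)-H(x)\|_2\le 2^{N+1}\|h\|_2$; feeding this into $(F'(x)-1)h\equiv 2^{N+1}(H(x+h)-H(x))\pmod{2^{\operatorname{ord}_2 h+2}}$ yields merely $F'\equiv1\pmod2$, not $\pmod4$. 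The paper does not sidestep this point either: it isolates precisely this statement as Lemma~\ref{le:prod} and cites the monograph \cite{AnKhr} for its proof.

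The cleanest fix stays entirely within your own framework and avoids computing $(f^{2^N})'$ at all. You have already established $\chi_n^{i+2^{n-1}}\equiv\chi_n^i+\chi_{n-1}^i+w_i\pmod2$ for every $i$ and every $n\ge N+1$. Apply this at $i$ and at $i+2^{n-1}$, and combine with the half-period relations $\chi_n^{i+2^n}\equiv\chi_n^i+1$ and $\chi_{n-1}^{i+2^{n-1}}\equiv\chi_{n-1}^i+1$: subtraction gives $w_{i+2^{n-1}}\equiv w_i\pmod2$, hence $y(i+2^{n-1})=y(i)$. Taking the smallest admissible value $n=N+1$ yields period dividing $2^{N}$ directly, and your $T\equiv0\pmod2$ drops out as the special case $y(2^N)=y(0)$.
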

\begin{proof}
See Appendix \ref{ssec:pr-thm.11}.
\end{proof}
Note that if a T-function is transitive then by Proposition \ref{prop:bij_u-diff}
it is uniformly differentiable modulo 2; so conditions of Theorem \ref{thm.11}
seem not too restrictive: we only demand that the T-function lies in the
second large differentiability class $\mathfrak D_2$ whereas 
it  already
lies in the largest one (i.e., in $\mathfrak D_1$) due to transitivity.

As both polynomial T-functions (the ones represented by polynomials over
$\mathbb Z_2$) and the Klimov-Shamir T-function (of the form $x+(x^2\vee C)$,
$C\in\mathbb Z$) are uniformly differentiable (thus, lie in $\mathfrak D_\infty$
and whence in $\mathfrak D_2$), our Theorem \ref{thm.11} could
be considered as a generalization of results due to Jin-Song Wang and Wen-Feng Qi, \cite{LinPropPol}, and to Molland and Helleseth, \cite{LinearKlSh05,LinPropT-func}. However, the class of transitive T-functions that are uniformly differentiable modulo 4 (thus, the class of T-functions that satisfy our Theorem \ref{thm.11}) is much wider: for instance, it contains all T-functions of forms $f(x)=u(x)+4\cdot
v(x)$
and $f(x)=u(x+4\cdot v(x))$,
where $u$ is a transitive T-function that is uniformly differentiable
modulo 4 and $v$ is an \emph{arbitrary} T-function, see \cite[Proposition
9.29]{AnKhr}. In particular, this implies that a monster T-function from \eqref{eq:wild}
satisfies Theorem \ref{thm.11}.

Moreover,  given an arbitrary T-function  $g$ that is uniformly differentiable modulo 2 (say, given a bijective T-function $g$),  the T-function $f(x)=1+x+2(g(x+1)-g(x))$ is transitive and uniformly differentiable modulo 4; cf. Theorem \ref{thm:expl}.

These examples serve to demonstrate how large is the class of T-functions that
satisfy Theorem \ref{thm.11}. More specific examples of the latter functions
can be constructed with the use of various techniques of non-Archimedean
analysis, see \cite{AnKhr}. For instance, exponential functions of the form $f(x)=ax+a^x$, where
$a\equiv 1\pmod 2$, are uniformly differentiable and transitive, as well
as rational functions of the form $f(x)=\frac{u(x)}{1+4\cdot v(x)}$, where $u$
is a transitive polynomial and $v$ is arbitrary T-function. We remind that
a polynomial over $\mathbb Z_2$ is transitive iff it is transitive modulo 8.
\subsection{Quadratic relation}
Our second result yields that if a T-function  lies in the third largest differentiability
class $\mathfrak D_3$ then there exist a quadratic relation among \emph{three} adjacent coordinate
sequences:
\begin{theorem}
\label{thm:non-lin}
Let the ergodic T-function $f$ be uniformly differentiable modulo $8$. Given
$x_0\in\mathbb Z_2$, for all $n\geq$ $N_{3}(f)+2$ the following congruence
holds:
\begin{equation}
\label{equ.12}
\chi_{n}^{i+2^{n-2}}\equiv\chi_{n-2}^{i}\chi_{n-1}^{i}+\theta(n)(\chi
_{n-2}^{i}+\chi_{n-1}^{i})+\chi_{n}^{i}+y_{i}\pmod2, \qquad (i=0,1,2,\ldots),
\end{equation}
where $\theta(n)\in\{0,1\}$ does not depend on $i$. Furthermost, the length of the shortest period of the binary sequence $(y_{i})_{i=0}^\infty$
 is a factor of $2^{N_{3}(f)}$ if $N_{3}(f)>1$.
\end{theorem}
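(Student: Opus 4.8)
The plan is to compute the bit $\chi_n^{i+2^{n-2}}=\delta_n(x_{i+2^{n-2}})$ by writing $x_{i+2^{n-2}}$ as $x_i$ plus an explicit increment and then reading off the $n$-th bit of the sum modulo $2^{n+1}$. Using the normal form of a transitive T-function from Theorem \ref{thm:expl}, write $f(x)=1+x+2\psi(x)$ with $\psi(x)=g(x+1)-g(x)$; telescoping the one-step increments along the orbit gives the exact identity
\begin{equation*}
x_{i+2^{n-2}}=x_i+2^{n-2}+2T_i,\qquad T_i=\sum_{j=0}^{2^{n-2}-1}\psi(x_{i+j}).
\end{equation*}
Thus everything reduces to (i) pinning down $T_i\bmod 2^{n}$ and (ii) a carry analysis of the addition $x_i+(2^{n-2}+2T_i)$ at bit positions $n-2,n-1,n$. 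The choice of shift $2^{n-2}$ together with target bit $n$ is exactly why differentiability modulo $8$ (three bits of resolution) is the right hypothesis, mirroring the way the linear relation of Theorem \ref{thm.11} needs only modulo $4$.

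For step (i) I would use two ingredients. First, transitivity modulo $2^{n-2}$ makes $x_i,x_{i+1},\dots,x_{i+2^{n-2}-1}$ a complete residue system modulo $2^{n-2}$; writing $x_{i+j}=r_{i,j}+2^{n-2}s_{i,j}$ with $r_{i,j}=x_{i+j}\bmod 2^{n-2}$, the $r_{i,j}$ run once through $\{0,\dots,2^{n-2}-1\}$. Second, since $f\in\mathfrak D_3$ one checks $\psi=(f-x-1)/2\in\mathfrak D_2$ with $N_2(\psi)\le N_3(f)$, so for $n\ge N_3(f)+2$ the expansion $\psi(r+2^{n-2}s)\equiv\psi(r)+\psi'(r)\,2^{n-2}s\pmod{2^{n}}$ is valid term by term. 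Summing gives
\begin{equation*}
T_i\equiv\Sigma_n+2^{n-2}R_i\pmod{2^{n}},\qquad \Sigma_n=\sum_{r=0}^{2^{n-2}-1}\psi(r),\quad R_i=\sum_{j}\psi'(r_{i,j})\,s_{i,j},
\end{equation*}
and only $R_i\bmod 4$ is needed. The decisive point is that the telescoping form forces $\Sigma_n=g(2^{n-2})-g(0)$, whence $\ord_2\Sigma_n\ge n-2$ by compatibility of $g$; therefore the constant contribution $2\Sigma_n$ disturbs only bits $\ge n-1$ and the addition generates \emph{no} carry into bit $n-2$. This is precisely what keeps the subsequent carry analysis clean, and it is the place where transitivity (not just differentiability) is used.

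For step (ii), adding $2^{n-2}$ to $x_i$ generates a double carry through positions $n-2$ and $n-1$, contributing exactly $\chi_n^i+\chi_{n-2}^i\chi_{n-1}^i$; this is the source of the quadratic term. The constant bits of $2^{n-2}+2\Sigma_n$ at positions $n-1,n$ amount to a single bit $\theta(n)=\delta_{n-1}(2^{n-2}+2\Sigma_n)$, independent of $i$, which after the carry bookkeeping multiplies $\chi_{n-2}^i+\chi_{n-1}^i$; all remaining contributions, built from $R_i\bmod 4$ and from fixed constants, are collected into $y_i$. To finish I would invoke Proposition \ref{prop:diff_mod}: the derivative $\psi'$ modulo $4$ is periodic with period $2^{N_2(\psi)}\le 2^{N_3(f)}$, so the $i$-dependence of the surviving part of $R_i$ factors through $x_i\bmod 2^{N_3(f)}$; since the orbit modulo $2^{N_3(f)}$ is itself periodic with period $2^{N_3(f)}$, the sequence $(y_i)$ is periodic with period dividing $2^{N_3(f)}$.

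I expect the real work to lie entirely in this last step, namely in showing that $y_i$ is genuinely a function of $x_i\bmod 2^{N_3(f)}$. The naive carry expansion leaves cross terms in which the non-periodic bits $\chi_{n-2}^i,\chi_{n-1}^i$ are multiplied by the $i$-dependent low bit $\delta_0(R_i)$; these cannot be absorbed into a periodic $y_i$ and must be shown to cancel, so that the coefficient of $\chi_{n-2}^i+\chi_{n-1}^i$ is exactly the constant $\theta(n)$. The hard part is therefore proving that, modulo $8$, all dependence of $R_i$ beyond the lowest $N_3(f)$ coordinates of $x_i$ disappears, i.e.\ that the higher parts $s_{i,j}$, weighted by the periodic $\psi'(r_{i,j})$ and summed over the equidistributed orbit, reorganize into a function of $x_i\bmod 2^{N_3(f)}$ plus terms matching the explicit linear part. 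This is the crux; the valuation identity $\ord_2\Sigma_n\ge n-2$ and the complete-residue-system structure are the levers I would use to force these cancellations.
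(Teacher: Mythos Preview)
Your telescoping derivation of the main congruence is essentially sound and, after the carry bookkeeping, lands on the same quadratic expression as the paper. The real divergence from the paper is in how the $i$-dependence of the increment is organized, and this is precisely where your proposal leaves a genuine gap.

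The paper does \emph{not} telescope from $x_i$. It writes the iterate the other way round, $f^{i+2^{n-2}}(x_0)=f^{i}\bigl(f^{2^{n-2}}(x_0)\bigr)=f^{i}\bigl(x_0+2^{n-2}\varphi(x_0)\bigr)$, and then applies uniform differentiability modulo $8$ of the iterate $f^{i}$ at the \emph{fixed} point $x_0$. The increment is therefore $2^{n-2}\varphi(x_0)\cdot(f^{i})'_3(x_0)$: the first factor is a constant in $i$, and the entire $i$-dependence sits in the single number $(f^{i})'_3(x_0)\bmod 8$. By the chain rule this is the product $\prod_{j<i}f'_3(x_j)$, and since $f'_3$ is periodic with period $2^{N_{3}(f)}$ (Proposition~\ref{prop:diff_mod}) and the orbit modulo $2^{N_{3}(f)}$ has that same period, the periodicity of $\lambda(i),\eta(i)$ and hence of $y_i$ is immediate.

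Your decomposition instead places the $i$-dependence in $R_i=\sum_{j}\psi'(r_{i,j})\,s_{i,j}$, where the weights $s_{i,j}$ are the \emph{high} bits of every $x_{i+j}$ along a window of length $2^{n-2}$. The periodicity of $\psi'$ controls the coefficients, but the $s_{i,j}$ carry genuine dependence on bits $n-2,n-1$ of the whole orbit segment; neither the valuation $\ord_2\Sigma_n\ge n-2$ nor the complete-residue-system structure touches these, so the ``levers'' you name do not force the cancellation you need. If you try to compare $R_{i+2^{N_3(f)}}$ with $R_i$ directly, the difference still contains $\delta_1(\varphi(x_{i+k}))$, which again involves high-order bits. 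In short, the telescoping-from-$x_i$ viewpoint mixes the constant and the periodic parts of the increment, and disentangling them afterwards is essentially equivalent to rediscovering the paper's factorization $\varphi(x_0)\cdot(f^{i})'_3(x_0)$. The missing idea is exactly the chain-rule representation of the derivative of the iterate at a fixed base point; once you insert that, the periodicity claim becomes a one-line argument rather than the ``crux'' you anticipate.
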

\begin{proof}
See Appendix \ref{ssec:pr-thm:non-lin}.
\end{proof}
As the Klimov-Shamir T-function $f(x)=x+(x^2\vee C)$ for $C\in\mathbb Z$, is uniformly differentiable,
it satisfy Theorem  \ref{thm:non-lin} once it is transitive, i.e.,
once $C\equiv 5\pmod 8$ or $C\equiv 7\pmod 8$; thus, Theorem \ref{thm:non-lin}
may be considered as a generalization of a result of Yong-Long Luo and Wen-Feng Qui \cite{AlgStr_KlSh} who proved quadratic relation for the Klimov-Shamir T-function.




\section{Application to T-function-based stream ciphers}
\label{sec:Ap-TfSC}
In this section we discuss how  relations \eqref{eq:lin_coord}
and \eqref{equ.12} from Theorems
\ref{thm.11} and \ref{thm:non-lin} may be used to attack stream ciphers that
use T-functions to generate pseudorandom sequences. We do not construct attacks
themselves,
we only point out some approaches that may result in the attacks. We consider mostly the linear relation; however, one
may use the quadratic
relation as well,  by analogy.

Basically a stream cipher is a pseudorandom generator where the produced
binary
sequence is used as a keystream, i.e., is XOR-ed with a plaintext to encrypt
a message. A pseudorandom generator (PRG) can be thought of as an algorithm that
takes at random a short initial binary string, the key, and stretches it
to a much longer binary sequence, the keystream, which looks like random.
that is, passes a set of reasonable tests in a reasonable time.  A stream
cipher must withstand various cryptographic attacks.

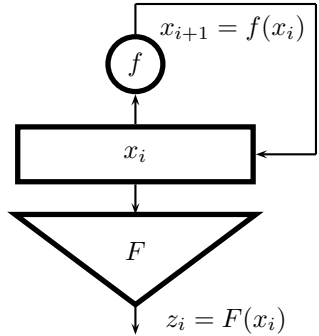
\begin{figure}[h]
\begin{quote}\psset{unit=0.4cm}
 \begin{pspicture}(-5,0)(16,12)
\pscircle[linewidth=2pt](12,10){1}
  \psline(12,11)(12,12)
  \psline{->}(18,7)(16,7)
  \psline(18,12)(18,7)
  \psline(12,12)(18,12)
  \psline{->}(12,8)(12,9)
  \psline{->}(12,2)(12,1)
  \psline{<-}(12,5)(12,6)
  \psframe[linewidth=2pt](8,6)(16,8)
  \pspolygon[linewidth=2pt](8,5)(16,5)(12,2)
  \uput{0}[180](12.4,7){$x_i$}
  \uput{0}[90](12,9.6){$f$}
  \uput{1}[90](12,2.5){$F$}
  \uput{1}[0](11.8,11.2){$x_{i+1}=f(x_i)$}
  \uput{1}[0](12,1.5){$z_i=F(x_i)$}
 \end{pspicture}
\end{quote}
\caption{Pseudorandom generator}
\label{fig:PRG}
\end{figure}
Basically a PRG can be considered as an automaton with no input (see Figure
\ref{fig:PRG}) , where initial
state  $x_0\in\{0,1,\ldots,2^k-1\}$ is a  key, or is produced during the `warming-up' stage from the key
and IV, the initial  vector. We assume that the \emph{state transition function $f$ is a T-function on $k$-bit words}. Moreover, $f$ (as well as the output
function $F$) may depend on a key, or even may change during the encryption
procedure, that is actually the recurrence law is $x_{i+1}=f_i(x_i)$.
In the latter case, the corresponding generator is called \emph{counter-dependent}
\cite{ShTs}; and we assume that \emph{all} $f_i$ are T-functions on $k$-bit
words. Foremost, they may be multivariate T-functions as well, and not
necessarily univariate ones.

Our second basic assumption yields that one knows \emph{sufficiently long  segments
of two coordinate sequences}
$(\chi_{n-1}^i)_{i=0}^\infty$ and $(\chi_{n}^i)_{i=0}^\infty$ $n\le k-1$,
of the state sequence $(x_i)_{i=0}^\infty$. In this Section, we explain \emph{how under these
assumptions one can recover  low order coordinate
sequences $(\chi_m^i)_{i=0}^\infty$ for $m<n-1$}. 
After explaining general method in Subsection \ref{ssec:basic}
for the case of univariate transitive T-function,
we apply the method to  multivariate transitive T-functions (Subsection \ref{ssec:mult}) and to counter-dependent
generators (Subsection \ref{ssec:cnt-dpd}).
\subsection{General method}
\label{ssec:basic}
 Assume that the state transition function $f$
does
not depend on $i$, and assume that $f$ is a reduction modulo $2^k$ of a univariate
transitive
T-function $\tilde f\colon\mathbb Z_2\to\mathbb Z_2$ (i.e., $f=\tilde f\bmod
2^k$) which is \emph{uniformly differentiable
modulo 4} so that $N_2(\tilde f)<n-1< k-1$. It can be shown (see e.g. the
example  at the end of \ref{sssec:TSC}) 
that, given a T-function $f$ which is transitive on $k$-bit
words,
transitive
T-functions $\tilde f\colon\mathbb Z_2\to\mathbb Z_2$ which are uniformly
differentiable modulo 4 and such that $f=\tilde
f\bmod 2^k$ always exist; however, the core of our assumption is that the
number $N_2=N_2(\tilde
f)$
must be sufficiently small: $N_2<n-1< k-1$.

We stress that in most cases the latter assumption is not too restrictive: e.g., for
polynomials with integer coefficients we have that $N_2\le 2$, whereas for the Klimov-Shamir T-function
$x+(x^2\vee 5)$  we have that $N_2=2$; and we have $N_2=1$ for monster T-function
\eqref{eq:wild}. Note that although for  Klimov-Shamir
T-function  $x+(x^2\vee C)$, $C\in\mathbb Z$, which is uniformly differentiable
if $C\in\mathbb N_0$,
the number $N_2$ depends on the length of binary representation of $|C|$,
in practice only small $C$ should be used (e.g., $C=5$) since
distribution properties of the Klimov-Shamir T-function are the poorer the more 1-s are in the
2-adic representation of $C$: For instance, if $C<0$ then  2-dimensional
distribution properties of output sequence of corresponding Klimov-Shamir generator
are practically the same as the ones for the transitive T-function $x\mapsto x-1$,
see \cite{Ryk-KlSh} for a comprehensive  study of distribution properties of  Klimov-Shamir generators; some information about
these can also be found in \cite[Section 11.1]{AnKhr}.

In practice, to construct a T-function $\tilde f$ given the T-function
$f$ we should do absolutely nothing since actually $\tilde f$ is just an
expansion of $f$
to the whole space $\mathbb Z_2$: for instance, if $f$ is a polynomial with
integer coefficients (or Klimov-Shamir T-function $x+(x^2\vee C)$, or monster
T-function \eqref{eq:wild}, etc.), then $\tilde f$ is just the same polynomial
(Klimov-Shamir T-function, monster T-function) considered
over a larger domain, $\mathbb Z_2$ rather than $\mathbb Z/2^k\mathbb Z$.
Thus, our basic assumption just yields that the transitive T-function $f$ must be uniformly differentiable modulo 4 and $N_2(f)$ must be sufficiently
small, at least, smaller than $k-2$; then we can recover coordinate sequences
$(\chi_m^i)_{i=0}^\infty$ for $m=n-2,n-3,\ldots, N_2(f)$. Of course, to recover
the whole $m$-th coordinate sequence we just have to recover its first $2^{m-1}$
terms due to the property \eqref{eq:half-per}.

We now proceed with all these assumptions in mind. 
\subsubsection{The method for a univariate T-function}
\label{sssec:lin-rel}
 We proceed as follows.
\begin{enumerate}
\item Given first $2^n$ bits of coordinate
sequences $(\chi_{n-1}^i)_{i=0}^\infty$ and $(\chi_{n}^i)_{i=0}^\infty$,
we find the sequence $(y(i))_{i=0}^{2^{n-1}-1}$ by solving equations
\eqref{eq:lin_coord} w.r.t. $y(i)$.
\item As by Theorem \ref{thm.11} the sequence $(y(i)$ does not depend on
$n$, having $(y(i))_{i=0}^{2^{n-1}-1}$ and solving equations \eqref{eq:lin_coord}
for $n:=n-1$ and $i=0,1,2,\ldots,2^{n-2}-1$
we find \emph{two}  sequences $\mathcal S_{n-2}^0$ and $\mathcal S_{n-2}^1$ of solutions
$(\chi^i_{n-2})_{i=0}^{2^{n-2}-1}$: the first sequence $\mathcal S_{n-2}^0$
of solutions corresponds
to the choice $\chi_{n-2}^0=0$, whereas the second one, $\mathcal S_{n-2}^1$, corresponds to the
choice $\chi_{n-2}^0=1$ in equation \eqref{eq:lin_coord}. Therefore the two
bit sequences
$\mathcal S_{n-2}^0$ and $\mathcal S_{n-2}^1$
are \emph{mutually complementary},  
$\mathcal S_{n-2}^0\oplus\mathcal S_{n-2}^1=(1)_{i=0}^{2^{n-2}-1}$; that is, the sum of the
$i$-th term of the first sequence with the $i$-th term of the second sequence
is always 1 modulo 2, for all $i=0,1,2,\ldots, 2^{n-2}-1$. Now to find full period
$\mathcal S_{n-2}=(\chi^i_{n-2})_{i=0}^{2^{n-1}-1}$
of the $(n-2)$-th coordinate sequence $(\chi^i_{n-2})_{i=0}^\infty$ we use
relation  \eqref{eq:half-per} (which yields that
$\chi_{n-2}^{i+2^{n-2}}\equiv\chi_{n-2}^i+1\pmod 2$ in the case under consideration)
to continue finite sequences $\mathcal S_{n-2}^0$ and $\mathcal S_{n-2}^1$,
which actually are two variants of the first half-period of the $(n-2)$-th
coordinate sequence $\mathcal S_{n-2}$, to  full periods, of length $2^{n-1}$;
we keep the same notation for these two variants of the full period, i.e.,
$\mathcal S_{n-2}^0$ and $\mathcal S_{n-2}^1$.
Thus we find \emph{two solutions for the full period of $(n-2)$-th coordinate sequence $\mathcal
S_{n-2}$, namely, $\mathcal S_{n-2}^0$ and $\mathcal S_{n-2}^1$, and the
solutions are mutually complementary: $\mathcal S_{n-2}^0\oplus\mathcal S_{n-2}^1=(1)_{i=0}^{2^{n-1}-1}$.}
\item Next, given the sequence $(y(i))$ and two variants   $\mathcal S_{n-2}^0$ and $\mathcal S_{n-2}^1$ of the $(n-2)$-th coordinate   sequence,
we find a pair of mutually complementary sequences $\mathcal S_{n-3}^0$ and $\mathcal S_{n-3}^1$ for either of $\mathcal S_{n-2}^0$ and $\mathcal S_{n-2}^1$
by solving equation \eqref{eq:lin_coord} for  $n:=n-2$ w.r.t. indeterminate
$\chi_{n-3}^i$. However, among these 4 obtained  variants of the first half-period
of the $(n-2)$-th coordinate   sequence there are only \emph{two} different (depending on the value of $\chi^0_{n-2}\oplus\chi^0_{n-3}$) and
they are mutually complementary. Thus, at this step we again obtain
\emph{two solutions, $\mathcal S_{n-3}^0$ and $\mathcal S_{n-3}^1$, for the full period of the $(n-3)$-th coordinate sequence $\mathcal
S_{n-3}$, and the
solutions are mutually complementary: $\mathcal S_{n-3}^0\oplus\mathcal S_{n-3}^1=(1)_{i=0}^{2^{n-2}-1}$.}
\item Proceed with $n:=n-3$, etc.
\end{enumerate}
Two important remarks should be made:
\begin{itemize}
\item As the T-function $f$ is uniformly differentiable modulo 4, at every step $j$ we recover \emph{two} variants of the first
half of a period of the $(n-j)$-th coordinate sequence rather than $2^{2^{n-j}}$
variants for a general transitive T-function $f$, cf. Theorem \ref{thm:half-per};
and the two variants are mutually complementary, so \emph{actually we need to recover
only one of these variants}; so at each step $j$ we just solve $2^{n-j}-1$
linear Boolean equations \eqref{eq:lin_coord}, for $i=1,2,\ldots, 2^{n-j}-1$,
each of one Boolean indeterminate,
$\chi_{n-j}^i$.
\item Nowhere in the algorithm we used the T-function $f$ by itself, e.g.,
its explicit representation in a certain form; \emph{we used only the fact that
$f$ is transitive and uniformly differentiable modulo 4}. 
\end{itemize}
%
\subsection{The case of multivariate T-functions}
\label{ssec:mult}
We firstly stress that a multivariate transitive T-function that is uniformly
differentiable modulo 2 (thus, modulo 4) \emph{does not exist}, see \cite[Theorem
4.51]{AnKhr}; and secondly, that \emph{all known multivariate transitive T-functions
actually are just multivariate representations of univariate transitive T-functions},
see \cite[Section 10.4]{AnKhr}. We briefly explain now what are the latter representations.

A transitive multivariate T-function is a map of form \eqref{eq:t-mult} from the $n$-th Cartesian power $\mathbb
Z_2^n$ of the space $\mathbb Z_2$ to its $m$-th Cartesian power $\mathbb
Z_2^m$ where $m=n$. Loosely speaking, we can
consider an element
of 
$\mathbb Z _2^{m}$ as a table of $m$ one-side infinite
binary rows $x^{(0)},\ldots,x^{(m-1)}$ (say, stretching from left to right). 
To this table, we put into the correspondence  infinite binary string 
(that
is, a 2-adic integer from $\mathbb Z_2$) obtained
by reading successively elements of each column of the table, from top to bottom and from left to right. Thus we establish a one-to-one correspondence
$B$ between
$\mathbb Z_2^m$ and $\mathbb Z_2$. Now, given a transitive univariate
T-function $f$ of form \eqref{eq:t-uni} and using the correspondence, we construct
an $m$-variate transitive T-function $\mathbf f\colon\mathbb Z_2^m\to\mathbb Z_2^m$:
If
$$x=(\chi_0;\chi_1;\chi_2; \ldots)\stackrel{f}{\mapsto}
(\psi_0(\chi_0);\psi_1(\chi_0,\chi_1);\psi_2(\chi_0,\chi_1,\chi_2);\ldots)$$
then $\mathbf f=(h^{(0)},\ldots,h^{(m-1)})$ is defined as follows:
\begin{align}
\label{eq:T-mult-r}
 {x^{(0)}=}&(\chi_0&;{\quad}&\chi_{m}&;{\quad}&\chi_{2m};\quad\ldots&{})&\stackrel{ {h^{(0)}}}{\mapsto}&{}&(\psi_0(x)&;{}\quad&\psi_{m}(x)&;{}\quad&\psi_{2m}(x)\quad;\ldots&)\\
 {x^{(1)}=}&(\chi_1&;{\quad}&\chi_{m+1}&;{\quad}&\chi_{2m+1};\quad\ldots&{})&\stackrel{ {h^{(1)}}}{\mapsto}&{}&(\psi_1(x)&;{}\quad&\psi_{m+1}(x)&;{}\quad&\psi_{2m+1}(x);\quad\ldots&)\nonumber\\
\ldots&\ldots&{}&\ldots&{}&\ldots&{}&\ldots&{}&\ldots&{}&\ldots&{}&\ldots&{}\nonumber\\
 {x^{(m-1)}=}&(\chi_{m-1}&;{\quad}&\chi_{2m-1}&;{\quad}&\chi_{3m-1};\quad\ldots&{})&\stackrel{ {h^{(m-1)}}}{\mapsto}&{}&(\psi_{m-1}(x)&;{}\quad&\psi_{2m-1}(x)&;{}\quad&\psi_{3m-1}(x);\quad\ldots&)\nonumber
\end{align}
where $x^{(0)},\ldots,x^{(m-1)}$ are new 2-adic variables,
$\psi_j(x)=\psi_j(\chi_0,\ldots,\chi_j)$, $j=0,1,2,\ldots$. We stress that known
multivariate
transitive T-functions from  \cite{klimov04new,hong05new} are based on
representations  of this sort of univariate transitive T-functions; and that
these are
multivariate T-functions that are used in the design of
ciphers
Mir-1 \cite{Mir1}, ASC \cite{ASCCipher}, TF-i family \cite{klsh04tfi}, and TSC family \cite{Hong05tsc3}.

To apply our basic approach \ref{sssec:lin-rel} to a multivariate T-function
$\mathbf f$
of this sort, the  corresponding univariate T-function $f$ must be uniformly differentiable
modulo 4. However, even this is not the case, we can consider a \emph{conjugated}
univariate T-function $f^w$ which is uniformly differentiable modulo 4. Indeed,
all univariate transitive T-functions are mutually conjugated: \emph{Given a pair
of transitive T-functions $u,v\colon\mathbb Z_2\to\mathbb Z_2$, there exists
a bijective T-function $w\colon\mathbb Z_2\to\mathbb Z_2$ such that $u=v^w=w^{-1}\circ
v\circ w$}, where $\circ$ stands for  composition of functions (see e.g.
\cite{3-adic}). Now, if we know the conjugating
function $w$ we can apply  method \ref{sssec:lin-rel}.
\subsubsection{The method for multivariate T-functions}
\label{sssec:lin-rel-mult}
Denote $B\colon\mathbb Z_2^m\to\mathbb Z_2$ the above one-to-one correspondence
between $\mathbb Z_2^m$ and $\mathbb Z_2$; thus, given a transitive $m$-variate T-function
$\mathbf f=(h^{(0)},\ldots,h^{(m-1)})\colon\mathbb Z_2^m\to\mathbb Z_2^m$
of form \eqref{eq:T-mult-r}, the corresponding univariate T-function is
$f={\mathbf f}^{B^{-1}}=B\circ \mathbf f\circ B^{-1}$. Now let $g$ be a univariate
T-function for which relations \eqref{eq:lin_coord} holds. As $f=g^w$ for
a suitable T-function $f\colon\mathbb Z_2\to\mathbb Z_2$ (we assume that
$w$ is known), then the $i$-th term $\mathbf x_i$ of the output sequence $(\mathbf x_i)_{i=0}^\infty$ of the generator with the recursion
law $\mathbf x_{i+1}=\mathbf f (\mathbf x_{i})$, $\mathbf x_i=(x^{(0)}_i,\ldots,x^{(m-1)}_i)$
can be represented as
$\mathbf x_i=\mathbf f^i(\mathbf x_0)=B^{-1}(w^{-1}(g^i(w(B(\mathbf x_0)))))$. Therefore, as for $g$  linear relations \eqref{eq:lin_coord} hold,
we can use them to recover coordinate sequences of the sequence
$(\mathbf x_i)_{i=0}^\infty$ since
$$
w(B(\mathbf x_i))=g^i(w(B(\mathbf x_0)).
$$
In other words, \emph{rather than trying to recover coordinate sequences of the
generator with the recursion law $\mathbf x_{i+1}=\mathbf f (\mathbf x_{i})$
and with initial state $\mathbf x_0$ we can study coordinate sequences of
the generator with the recursion law $x_i=g(x_0)$ with the initial state
$x_0=w(B(\mathbf x_0))$ and with a bijective output function $B^{-1}\circ
w^{-1}$}.

Basically the approach will work if the output function $B^{-1}\circ
w^{-1}$ is known. However, the bijective output function $B^{-1}\circ
w^{-1}$ can be considered as ``known"
if $w$ is easy to find and easy to invert; i.e.,
\begin{itemize}
\item if it is easy to find  the conjugating T-function $w$ given T-functions $f$ and $g$ which
are conjugated via $w$: $f=g^w$ (in particular, $w$ must admit then a ``short" representation in some form); and
\item if, given $w$, it is easy to find the inverse
T-function $w^{-1}$ such that $w\circ w^{-1}$ is an identity transformation
(in particular, this means that $w^{-1}$ admits a ``short" representation as well).
\end{itemize}

Indeed,  $B$ is just ``concatenation of columns": it maps $m$ strings (2-adic integers) $x^{(0)},\ldots,x^{(m-1)}$
(see the left side of \eqref{eq:T-mult-r}) to a single string (a 2-adic integer) $x=(\chi_0;\chi_1;\ldots;\chi_{m-1};\chi_{m};\ldots;\chi_{2m-1};\chi_{2m};\ldots)$; so the inverse $B^{-1}$ is just ``cutting a single string into columns of height $m$", which is easy.

Finding $w$ from the equation $f=g^w$ may be an infeasible task: Although, given
two single cycle permutations $f$ and $g$ on some finite set, one may find
all conjugating permutations $w$ by solving the equation by  Cauchy method,
direct application of the latter will take exponentially long time since in our case the set is of order
$2^{km}$ (if we consider an $m$-variate T-function on $k$-bit words).
Also, given a bijective T-function $w$ in some `short' form, there are a number of algorithms
to find the inverse T-function $w^{-1}$; however, the representation
of $w^{-1}$ may be too long and thus the problem of finding $w^{-1}$ will
also be infeasible.

On the other side, in many practical cases main ideas of the approach work
either directly or after certain adjustment: to illustrate, we  apply these to a multivariate T-function from \cite{hong05new}
which is used in TSC family of stream ciphers.
\subsubsection{Linear relation in multivariate function of TSC family of ciphers}
\label{sssec:TSC}
We start with a description   of a general T-function $T$ used in these ciphers.
Given $\mathbf x=(x^{(0)},\ldots,x^{(m-1)})\in\mathbb Z_2^m$, denote $\delta_j(\mathbf x)=(\delta_j(x^{(0)}),\ldots, \delta_j(x^{(m-1)}))$ (the $j$-th columnar binary vector $(\chi_{jm},\ldots,\chi_{(j+1)m-1})$ in the notation of \eqref{eq:T-mult-r})

A special $m$-variate T-function $\alpha(\mathbf x)$ on $k$-bit words (the \emph{odd parameter}) is fixed. For our purposes, we do not need detailed description of  $\alpha(\mathbf x)$, we only note that in our terms $\alpha\colon\mathbb Z_2^m\to\mathbb Z_2$ is a T-function such that $\delta_j(\alpha(\mathbf x))$ does not depend
on $\delta_j(\mathbf x)$ and the Boolean function $\delta_j(\alpha(\mathbf x))$ of Boolean variables $\chi_0,\ldots,\chi_{im-1}$ is of odd weight; that
is $\delta_0(\alpha(\mathbf x))=1$ and algebraic normal form of the Boolean
function $\delta_j(\alpha(\mathbf x))$ contains a monomial $\chi_0\cdots\chi_{jm-1}$
(this is equivalent to the definition of odd parameter in \cite{hong05new,klimov03cryptographic,klimov04new})

Further, an S-box is fixed. That is, the sequence of  permutations $S_0,S_1,S_2,\ldots$
on $m$-bit words is given. Each permutation $S_j$ acts on the $j$-th column
$D_j(\mathbf x)=\delta_j(\mathbf x)=(\chi_{jm},\ldots,\chi_{(j+1)m-1})$ by substituting it for $S_j(D_j(\mathbf x))$ . Also, a sequence
of odd numbers $\sigma_0,\sigma_1,\sigma_2,\ldots$ and a sequence of even numbers $\varepsilon_0,\varepsilon_1,\varepsilon_2,\ldots$
are given. Now the T-function $T$ of TSC family of stream ciphers is defined as follows:
$$
\delta_j(T(\mathbf x))=
\begin{cases}
S_j^{\sigma_j}(\delta_j(\textbf x)),\ \text{if}\ \delta_j(\alpha(\textbf x))=1;\\
S_j^{\varepsilon_j}(\delta_j(\textbf x)),\  \text{if otherwise}.
\end{cases}
$$
The key point is that if $m$ is small, then, given $S_j$ and  a permutation $L_j$
that has the same cycle structure as $S_j$, one easily finds conjugating
permutation $R_j$ by solving the equation $S_j=R_j^{-1}L_jR_j$ by Cauchy
method.

In TSC family $m$ is small: For every TSC$-i$ ($i=1,2,3,4$), the input is arranged into
$m=4$ input words of $k=32$ (TSC-1, -2, -4) or $k=40$ (TSC-3) bits. That
is, to find conjugating permutations one will solve 32 or 40 equations $S_j=R_j^{-1}L_jR_j$
in the symmetric group on 16 elements. Moreover, in TSC family all permutations $S_j$ are
single cycles.

Now put $L_j(z)=(z+1)\bmod{2^m}$, a single cycle permutation that acts on
$m$-bit words by adding 1 modulo $2^m$; that is, $L_j$
reads the $j$-column $(\chi_{im};\chi_{im+1};\ldots;\chi_{(i+1)m-1})$
as a base-2 expansion of a non-negative
integer $z=\chi_jm+\chi_{jm+1}\cdot 2+\cdots+\chi_{(j+1)m-1}2^{m-1}$,
sends $z$ to the least non-negative residue $\overline
{z+1}$ of $z+1$ modulo $2^m$ and returns the column 
$(\delta_0(\overline{z+1});\delta_1(\overline{z+1});\ldots;\delta_{m-1}(\overline{z+1}))$  consider a T-function $L\colon\mathbb Z_2^m\to\mathbb Z_2^m$ defined
as follows:
$$
\delta_j(L(\mathbf x))=
\begin{cases}
L_j^{\sigma_j}(\delta_j(\textbf x)),\ \text{if}\ \delta_j(\alpha(\textbf x))=1;\\
L_j^{\varepsilon_j}(\delta_j(\textbf x)),\  \text{if otherwise}.
\end{cases}
$$
This implies that the T-function $T$ is conjugate to the univariate T-function
$t\colon\mathbb Z_2\to\mathbb Z_2$ that acts as follows: given the input string $x=(\chi_0;\chi_1;\ldots)$, it is considered as concatenation of $m$-bit words $q_0,q_1,\ldots$
$q_j=\chi_{jm}\chi_{im+1}\cdots\chi_{(j+1)m-1}$, the T-function $t$ reads
each word $q_j$
as a base-2 expansion of the non-negative number
$Q_j=\chi_{jm}+\chi_{jm+1}2+\cdots+\chi_{(j+1)m-1}2^{m-1}$, returns
the $m$-bit word $t_j(q_j)$ that is a base-2 expansion of the number
$$
D_j(t(x))=t_j(Q_j)=(Q_j+\sigma_j\cdot a_j(Q_0,\ldots,Q_{j-1})
+\varepsilon_j\cdot(1-a_j(Q_0,\ldots,Q_{j-1})))\bmod
2^m,
$$
where $a_j(Q_0,\ldots,Q_{i-1})=\delta_j(\alpha(B^{-1}(x)))$, $B$ is the one-to-one
correspondence
between $\mathbb Z_2^m$ and $\mathbb Z_2$ from \ref{sssec:lin-rel-mult}.

It turns out that coordinate sequences of each sequence $(D_j(t^i(x)))_{i=0}^\infty$
of $m$-bit words satisfy relation \eqref{eq:lin_coord}. Note that our claim
is that the
relation holds \emph{only} within every sequence $(D_j(t^i(x)))_{i=0}^\infty$,
and \emph{not} necessarily between the coordinate sequences
$(\delta_{jm-1}(t^i(x)))_{i=0}^\infty$ and $(\delta_{jm}(t^i(x)))_{i=0}^\infty$
since they belong to coordinate sequences of different sequences, of $(D_{j-1}(t^i(x)))_{i=0}^\infty$ and
$(D_j(t^i(x)))_{i=0}^\infty$, respectively.

To prove the claim it suffices to prove it for coordinate sequences (of sufficiently
large order) of a univariate T-function $f$ that is defined as follows. Let
$u\colon\mathbb Z/2^k\mathbb Z\to\mathbb Z/2^k\mathbb Z$ is a transitive
T-function on $k$-bit words, let the map $v\colon\mathbb Z/2^k\mathbb Z\to\{0,1\}$
takes value 1 on the odd number of $k$-bit words: $\#\{z\in\mathbb Z/2^k\mathbb Z\colon v(z)=1\}$ is odd; let $\sigma$ be odd, and let $\varepsilon$ be even.
Given $x\in\mathbb Z_2$, $x$ admits a unique representation $x=\bar x+2^k\tilde
x$ for a suitable $\tilde x\in\mathbb Z_2$. Now put
$$
f(x)=u(\bar x)+2^k(\tilde x+(\sigma-\varepsilon)v(\bar x)+\varepsilon).
$$

Firstly, we note that $f$ is uniformly differentiable and that $N_2(f)\le
k$. Indeed, given $h=2^\ell r$ for $\ell\ge k$, one has
$f(x+h)=u(\bar x)+2^k(\tilde x + 2^{\ell-k}r+(\sigma-\varepsilon)v(\bar x)+\varepsilon)=
f(x)+2^\ell r=f(x)+h$.

Secondly, $f$ is transitive. Indeed,
$$
f^{2^{k}}(x)=u^{2^k}(\bar x)+2^k\left(\tilde x+
(\sigma-\varepsilon)\sum_{j=0}^{2^k-1}v(u^j(\bar x))+2^k\varepsilon\right);
$$
however, $s=\sum_{j=0}^{2^k-1}v(u^j(\bar x))$ is odd by the definition of
$v$ since $u^j(\bar x)$
runs through all $k$-bit words as $j=0,1,2,\ldots, 2^k-1$, due to transitivity
of $u$. Thus, $f$ is transitive modulo $2^{k+2}$ as the map
$\tilde x\mapsto \tilde x+(\sigma-\varepsilon)s+2^k\varepsilon$ is obviously
transitive modulo 4 as $(\sigma-\varepsilon)s+2^k\varepsilon$ is odd. Finally,
$f$ is transitive by Theorem \ref{thm:erg} and thus satisfy conditions of
Theorem \ref{thm.11}. This proves our claim (of course, the transitivity
of $f$ might be proved directly rather than by applying Theorem \ref{thm:erg}).

We stress that we only state that there are linear relations of form \eqref{eq:lin_coord}
in the output sequences of generators based on  T-functions of the sort
of ones
used in TSC stream ciphers, and we do \emph{not} claim that these relations
 affect (or do not affect) the security of the ciphers. The latter is out of scope of the paper; it worth noting here only that the ciphers were successfully
attacked, however, using  vulnerabilities other than the ones we  indicate, see e.g. \cite{LinearTSC05,Zhang08tsc4dif}.

It also worth noticing here that the method can not be immediately applied
to stream ciphers Mir-1, TF-i and ASC although all of these are based on
a multivariate version of Klimov-Shamir T-function $x+(x^2\vee C)$ for which
the relations hold due to the result of Molland and Helleseth mentioned at
the beginning of the paper.

\subsection{The case of counter-dependent generators}
\label{ssec:cnt-dpd}
A counter-dependent generator is a pseudorandom generator with the recursion
law $x_{i+1}=f_i(x_i)$, that is, the state transition (and/or the output)
function changes dynamically during processing. Counter-dependent generators
were introduced in \cite{ShTs}; in \cite[Section 10.3]{AnKhr} it is shown
that counter-dependent generators can be considered as wreath products of
dynamical systems which are ordinary generators, and the corresponding theory is developed. The theory enables
one to construct counter-dependent generators of the longest possible period.
Generators of this kind were used in ABC stream ciphers, see
 \cite{ABCv1,ABCv2,ABCv206props,ABCv206safe,ABCv3}.

Loosely speaking, wreath product of generators is a cascaded composition of generators,
see Figure \ref{fig:wrPRG}.
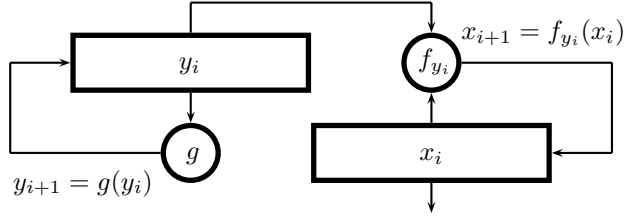
\begin{figure}[h]
\begin{quote}\psset{unit=0.4cm}
 \begin{pspicture}(-5,5)(16,13)
\pscircle[linewidth=2pt](12,10){1}
  \psline(4,11)(4,12)
  \psline{<-}(12,11)(12,12)
  \psline{->}(18,7)(16,7)
  \psline(18,10)(18,7)
  \psline(13,10)(18,10)
  \psline{->}(12,8)(12,9)
  \psline{<-}(12,5)(12,6)
  \psframe[linewidth=2pt](8,6)(16,8)
  \psframe[linewidth=2pt](0,9)(8,11)
  \pscircle[linewidth=2pt](4,7){1}
  \psline{->}(-2,10)(0,10)
  \psline(-2,7)(3,7)
  \uput{0}[180](4.3,6.9){$g$}
  \uput{1}[0](-2.9,6){$y_{i+1}=g(y_i)$}
  \psline(-2,7)(-2,10)
  \psline(4,12)(12,12)
  \psline{<-}(4,8)(4,9)
  \uput{1}[0](2.6,9.9){$y_i$}
  \uput{0}[180](12.4,6.9){$x_i$}
  \uput{0}[90](12.1,9.6){$f_{y_i}$}
  \uput{1}[0](12,11){$x_{i+1}=f_{y_i}(x_i)$}
 \end{pspicture}
\end{quote}
\caption{Counter-dependent generator, the wreath product of generators}
\label{fig:wrPRG}
\end{figure}
If all $f_{y_i}$ are T-functions on $k$-bit words, the maximum length of the shortest period of the counter-dependent generator from Figure \ref{fig:wrPRG} is $p\cdot 2^k$, where $p$ is the length of the shortest period of the generator
with the recursion law $y_{i+1}=g(y_i)$. For conditions when the counter-dependent
 generator achieves the longest
possible period see \cite[Theorem 10.9; Lemma 10.12]{AnKhr}; structure of the corresponding output sequence
is presented at Figure \ref{fig:wr-seq}: the shortest period of this sequence achieves
the maximum length, $p\cdot
2^k$, i.e., the period is a finite sequence $(x_i)_{i=0}^{p2^k-1}$ of length $p\cdot 2^k$ of $k$-bit
words which is a union of $p$ subsequences $(x_{r+pj})_{j=0}^{2^k-1}$, $r=0,1,2,\ldots,p-1$,
and each subsequence $(x_{r_pj})_{j=0}^{2^k-1}$ is generated by a transitive
T-function $w_r$:
$w_r=f_{y_{r+p+1}}\circ\cdots \circ f_{y_{r}}$,
  $w_r(x_{r+(\ell-1)p})=x_{r+\ell p}$,
  $\ell=1,2,\ldots$. We conclude now that \emph{if all T-functions $f_{y_j}$ are uniformly differentiable
modulo 4} then all T-functions $w_r$ are uniformly differentiable modulo
4 and transitive;
thus, all T-functions $w_r$ satisfy conditions of Theorem \ref{thm.11}. Therefore
\emph{coordinate sequences of every subsequence $(x_{r+\ell p})_{\ell=0}^\infty$
of output sequence $(x_i)_{i=0}^\infty$
satisfy linear relation \eqref{eq:lin_coord}}.

\begin{figure}[ht]
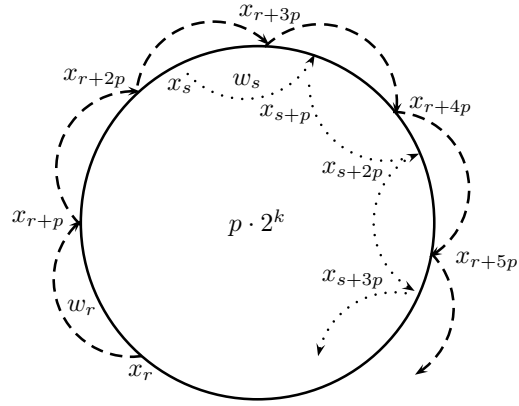

\psset{unit=0.35}
\pspicture(3.684904,-18.115097)(59.165097,-1.034904)
\scalebox{0.35 -0.35}
{

\rput(35,58){
\scalebox{2.8 -2.8}{
\rput(16.2,8.8){$x_{s+3p}$}
\rput(16.2,13){$x_{s+2p}$}
\rput(13.6,15.2){$x_{s+p}$}
\rput(9.5,16.3){$x_{s}$}
}}

\rput(35,58){
\scalebox{2.8 -2.8}{
\rput(4,11.1){$x_{r+p}$}
\rput(08,05.2){$x_{r}$}
\rput(21.4,9.5){$x_{r+5p}$}
\rput(19.6,15.5){$x_{r+4p}$}
\rput(13,19.1){$x_{r+3p}$}
\rput(06.2,16.6){$x_{r+2p}$}
}}

\rput(35,58){
\scalebox{2.8 -2.8}{
\rput(12.1,16.5){$w_s$}
\rput(05.7,7.7){$w_r$}
\rput(12.5,11.0){$p\cdot2^{k}$}
\pscircle[linewidth=1pt](12.5,11.0){6.9}
\psarc[linewidth=1pt,linestyle=dotted]{->}(12,18.8){3}{216}{335}
\psarc[linewidth=1pt,linestyle=dotted]{->}(17.5,16.4){3}{175}{295}
\psarc[linewidth=1pt,linestyle=dotted]{->}(20,11){3}{120}{242}
\psarc[linewidth=1pt,linestyle=dotted]{->}(17.8,5.3){3}{75}{170}
\psarc[linewidth=1pt,linestyle=dashed]{<-}(17.1,7.7){3}{300}{45}
\psarc[linewidth=1pt,linestyle=dashed]{<-}(17.7,12.3){3}{299}{89}
\psarc[linewidth=1pt,linestyle=dashed]{<-}(14.9,15.7){3}{350}{135}
\psarc[linewidth=1pt,linestyle=dashed]{<-}(10.8,15.8){3}{44}{170}
\psarc[linewidth=1pt,linestyle=dashed]{<-}(7.7,13.1){3}{85}{227}
\psarc[linewidth=1pt,linestyle=dashed]{<-}(7.6,8.8){3}{130}{278}
}}

}\endpspicture
  \caption{Structure of the sequence generated by  wreath product.}
  \label{fig:wr-seq}
\end{figure}
It is worth noting here that the above result on linear relations in coordinate
sequences
produced by wreath
products of generators can not be applied immediately to ABC stream ciphers
since the latter use wreath products of linear feedback shift register with
an `add-xor' generator. However, the latter is based on a transitive T-function of the form
$(\ldots((x\oplus a_1)+a_2)\oplus a_3)+a_4)\oplus\cdots$ which is not uniformly
differentiable modulo 4. Of course, this does not serve a  proof (or a disproof)
that there are no linear relations between coordinate sequences produced by the ABC wreath product.
\section{Conclusion}
\label{sec:Concl}
In the paper, we prove that a vast body of transitive T-functions exhibit linear and quadratic
weaknesses: we found a linear (Theorem \ref{thm.11}) and a quadratic (Theorem \ref{thm:non-lin}) relation that are satisfied by output sequences generated
by  univariate transitive T-functions that constitute a very vast class $\mathfrak D_2$
(see Subsection \ref{ssec:T-Calc} about the latter class).  Earlier relations of this sort
were known only for  T-functions of two special types: for the Klimov-Shamir
T-function $x+(x^2\vee C)$ and for polynomials with integer coefficients. The class $\mathfrak
D_2$ is much wider: it contains rational functions, exponential functions
as well as their various compositions with bitwise logical operations.
Moreover, we proved that  relations of this kind
hold in output sequences of corresponding classes of multivariate
T-functions as well as in output sequences of T-function-based counter-dependent
generators; the latter are generators with a recursion law of the form $x_{i+1}=f_i(x_i)$.
Primitives of both types, the multivariate T-function-based ordinary
generators and T-function-based counter-dependent generators, are used in
stream ciphers, e.g., in  ASC, TF-i, TSC, and in ABC.
We illustrated our method by finding linear relations for T-function of the
sort used in TSC stream ciphers.

\appendix
\section{Proofs of Theorems \ref{thm.11} and \ref{thm:non-lin}}
During the proofs, we will need the following
\begin{lemma}
\label{le:prod}
Let $f$ be a transitive T-function, and let $f$ be uniformly differentiable modulo
4, then
$$
\left(f^{2^{N_2(f)}}(z)\right)^\prime_2\equiv
\prod_{j=0}^{2^{N_2(f)}-1}f^\prime_2(f^j(z))\equiv 1\pmod 4,
$$
for every $z\in\mathbb Z_2$.
\end{lemma}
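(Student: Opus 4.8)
I would prove the two congruences in turn: the left-hand equality is the chain rule, while the right-hand congruence $\equiv 1\pmod 4$ is where transitivity is used. Write $N=N_2(f)$ throughout. For the first congruence, I would invoke the chain rule for derivatives modulo $2^M$, which holds as a congruence modulo $2^M$ exactly as in classical Calculus (the remark following Proposition~\ref{prop:diff_mod}). Applying it to the composition $f^{2^N}=f\circ\cdots\circ f$ gives at once $\left(f^{2^N}\right)'_2(z)\equiv\prod_{j=0}^{2^N-1}f'_2(f^j(z))\pmod 4$.

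To evaluate the product, set $a_j=f^j(z)$ and track the difference quotient $c_j=\bigl(f^j(z+2^N)-f^j(z)\bigr)/2^N$, which lies in $\mathbb Z_2$ since $f$ is compatible. Here $c_0=1$, and because a transitive T-function is bijective, each $f'_2(a_j)$ is odd (Proposition~\ref{prop:bij_u-diff}), so by induction every $c_j$ is odd and the increment $h=2^Nc_j$ has $2$-adic order exactly $N=N_2(f)$. Uniform differentiability modulo $4$ then yields $f(a_j+2^Nc_j)\equiv f(a_j)+f'_2(a_j)\,2^Nc_j\pmod{2^{N+2}}$, i.e. $c_{j+1}\equiv f'_2(a_j)\,c_j\pmod 4$; hence $c_{2^N}\equiv\prod_{j=0}^{2^N-1}f'_2(a_j)\pmod 4$. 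It therefore remains only to show $c_{2^N}\equiv 1\pmod 4$.

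This last step is the crux. From the half-period property \eqref{eq:half-per}, together with the fact that the $m$-th coordinate sequence has period $2^{m+1}$, one obtains the standard congruence $f^{2^n}(x)\equiv x+2^n\pmod{2^{n+1}}$ for every $n$. I would write $f^{2^N}(x)\equiv x+2^N+2^{N+1}\phi(x)\pmod{2^{N+2}}$ with $\phi(x)\in\{0,1\}$; a direct substitution then gives $c_{2^N}\equiv 1+2\bigl(\phi(z+2^N)-\phi(z)\bigr)\pmod 4$, so everything reduces to $\phi(z+2^N)\equiv\phi(z)\pmod 2$. Let $\pi=f^{2^N}\bmod 2^{N+2}$. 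Since $\pi^2=f^{2^{N+1}}\equiv x+2^{N+1}\pmod{2^{N+2}}$, comparing this with $\pi(\pi(x))$ forces $\phi(x)+\phi(\pi(x))\equiv 0\pmod 2$, so $\phi$ is constant modulo $2$ along $\pi$-orbits. Because $f$ is transitive modulo $2^{N+2}$ (Theorem~\ref{thm:erg}), $\pi$ is a product of $4$-cycles, and computing $\pi(z)\equiv z+2^N+2^{N+1}\phi(z)$ and $\pi^3(z)\equiv z+2^N+2^{N+1}(1+\phi(z))\pmod{2^{N+2}}$ shows that $z+2^N$ is always in the $\pi$-orbit of $z$ (it is $\pi(z)$ when $\phi(z)=0$ and $\pi^3(z)$ when $\phi(z)=1$). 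Hence $\phi(z+2^N)=\phi(z)$ and $c_{2^N}\equiv 1\pmod 4$, which finishes the proof.

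I expect the main obstacle to be precisely this identification of $z+2^N$ as a point of the $\pi$-orbit of $z$. The naive hope that $\phi$ depends only on $x\bmod 2^{N+1}$ (which would make $\phi(\pi(x))=\phi(x+2^N)$ immediate) fails, since $f^{2^N}(x)\bmod 2^{N+2}$ genuinely depends on $x\bmod 2^{N+2}$; one must instead use that $\phi$ is a \emph{cycle invariant} of $\pi$ together with the exact position of $z+2^N$ among $z,\pi(z),\pi^2(z),\pi^3(z)$. Everything else (chain rule, the increment induction, and the reduction via $\phi$) is routine once the correct cycle-theoretic picture modulo $2^{N+2}$ is in place.
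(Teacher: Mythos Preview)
Your argument is correct. The chain rule handles the first congruence, and your difference-quotient computation $c_j=(f^j(z+2^N)-f^j(z))/2^N$ cleanly reduces the second congruence to the cycle-theoretic fact that $z+2^N$ lies on the $\pi$-orbit of $z$ for $\pi=f^{2^N}\bmod 2^{N+2}$; the case split $\phi(z)=0$ versus $\phi(z)=1$ identifying $z+2^N$ with $\pi(z)$ or $\pi^3(z)$ is exactly right. One cosmetic remark: you do not need to invoke Theorem~\ref{thm:erg} for transitivity modulo $2^{N+2}$, since by Definition~\ref{def:trans} a transitive $T$-function is already transitive modulo every $2^n$.

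The paper's own treatment is different in character: it does not argue the right-hand congruence in the body at all, but cites the end of the proof of Theorem~4.55 in \cite{AnKhr} and gestures at Step~5 of the proof of Theorem~\ref{thm:non-lin} (which in fact establishes the analogous periodicity statement modulo~$8$, using that $f'_3$ is periodic with period $2^{N_3(f)}$ by Proposition~\ref{prop:diff_mod} together with transitivity modulo $2^{N_3(f)}$). The implicit \cite{AnKhr} route first observes, via Proposition~\ref{prop:diff_mod} and transitivity modulo $2^N$, that the product $\prod_{j=0}^{2^N-1}f'_2(f^j(z))$ is \emph{independent of $z$} (it equals $\prod_{a\in\mathbb Z/2^N\mathbb Z}f'_2(a)$), and then identifies this constant with $1\bmod 4$ inside the ergodicity criterion. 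Your approach bypasses the ``constant product'' observation entirely and instead extracts the value $1\bmod 4$ directly from the $4$-cycle structure of $f^{2^N}$ on $\mathbb Z/2^{N+2}\mathbb Z$. Both routes ultimately rest on transitivity at level $2^{N+2}$; yours is more self-contained and more combinatorial, while the paper's deferred argument is shorter once the machinery of \cite{AnKhr} is available.
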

\begin{proof}[Proof of Lemma \ref{le:prod}]
The left-side congruence immediately follows from the chain rule; the right-side
congruence is proved in \cite{AnKhr}, see the end of the proof of Theorem
4.55 there. It is worth noticing that we actually prove both congruences
while proving Theorem \ref{thm:non-lin},  see Step 5 in the proof of the
latter.
\end{proof}
\subsection{Proof of Theorem \ref{thm.11}}
\label{ssec:pr-thm.11}
From the transitivity of the T-function $f$ (see Definition \ref{def:trans})
it follows that%
$f^{2^{n-1}}(x)\equiv x\pmod{2^{n-1}}$; that is
\begin{equation}
\label{eq:f-iter}
f^{2^{n-1}}(x)=x+2^{n-1}\varphi(x)
\end{equation}
for a suitable map $\varphi\colon\mathbb{Z}_{2}\rightarrow\mathbb{Z}_{2}$. As $f$ is uniformly differentiable modulo 4, from \eqref{eq:f-iter} we
deduce that%
\begin{equation}
\label{eq:main}
f^{i+2^{n-1}}(x)=f^{i}(f^{2^{n-1}}(x))=
f^{i}(x+2^{n-1}\varphi(x))
\equiv f^{i}(x)+2^{n-1}\varphi(x)(f^{i}(x))_{2}^{^{\prime}}%
\pmod{2^{n+1}}%
\end{equation}
once $n\geq N_{2}(f)+1$.

Further, $\varphi(x)\equiv\alpha(x)+2\beta(x)\pmod4$ where
$\alpha\colon\mathbb Z_2\to\mathbb F_2=\{0,1\}$. We claim that $\alpha(x)=1$
for all $x\in\mathbb Z_2$.
Indeed, if otherwise,  then \eqref{eq:f-iter} implies that
\[
f^{2^{n-1}}(x)=x+2^{n}\beta(x)\equiv x\pmod{2^{n}},%
\]
in a contradiction to the transitivity of $f$ as necessarily $f^{2^{n-1}}(x)\not\equiv x\pmod{2^{n}}$ whenever $f$ is transitive, see Definition \ref{def:trans}. Thus, given $x\in\mathbb Z_2$,
\begin{equation}
\label{eq:phi}
\varphi(x)\equiv1+2\beta
\pmod4,
\end{equation}
for a suitable $\beta=\beta(x)\in\mathbb Z_2$.

As $f$ is bijective,
$f_{2}^{^{\prime}}(x)\equiv1\pmod2$ for all $x\in\mathbb{Z}_{2}%
$, see Proposition \ref{prop:bij_u-diff}. This in view of \eqref{eq:main}
and \eqref{eq:phi} implies that if we denote $(f^{i}(x))_{2}^{^{\prime}}\equiv1+2\gamma
\pmod4$ for a suitable $\gamma=\gamma(i;x)\in\{0,1\}$, then
\begin{equation}
\label{equ.8}
f^{i+2^{n-1}}(x)  
\equiv f^{i}(x)+2^{n-1}(1+2\beta)(1+2\gamma)\pmod{2^{n+1}}
\equiv f^{i}(x)+2^{n-1}+2^{n}(\beta+\gamma)\pmod{2^{n+1}}.
\end{equation}
Remind that $\chi_{j}^{\ell}=\delta_{j}(f^{\ell}(x))\in\{0,1\}$ ($j,\ell=0,1,2,\ldots$) according to our notation. 
With the notation, given $x=x_0\in\mathbb Z_2$, the transitivity of $f$ implies that
\begin{multline}
\label{equ.9}
f^{2^{n-1}}(\chi_{0}^{0}+\chi_{1}^{0}\cdot2+\cdots)\equiv\chi_{0}^{2^{n-1}%
}+\chi_{1}^{2^{n-1}}\cdot2+\cdots+\chi_{n-1}^{2^{n-1}}\cdot2^{n-1}+\chi
_{n}^{2^{n-1}}\cdot2^{n}\equiv\\
\chi_{0}^{2^{n-1}}+\chi_{1}^{2^{n-1}}\cdot2+\cdots+\chi_{n-2}^{2^{n-1}%
}\cdot2^{n-2}+(\chi_{n-1}^{0}\oplus1)\cdot2^{n-1}+\chi_{n}^{2^{n-1}}%
\cdot2^{n}\pmod{2^{n+1}}, %
\end{multline}
where $\oplus$ stands for addition modulo 2. On the other hand, %
\[
f^{2^{n-1}}(\chi_{0}^{0}+\chi_{1}^{0}\cdot2+\cdots)\equiv\chi_{0}^{0}+\chi
_{1}^{0}\cdot2+\cdots+\chi_{n}^{0}\cdot2^{n}+2^{n-1}+2^{n}\beta
\pmod{2^{n+1}}%
\]
in view of \eqref{eq:f-iter} and \eqref{eq:phi}. Comparing both congruences,
we conclude that
$
\chi_{n}^{2^{n-1}}\equiv\chi_{n-1}^{0}+\chi_{n}^{0}+\beta\pmod 2
$;
finally,
\begin{equation}
\beta\equiv\chi_{n-1}^{0}+\chi_{n}^{0}+\chi_{n}^{2^{n-1}}\pmod2.
\label{equ.10}%
\end{equation}
Now from(\ref{equ.8}), (\ref{equ.9}), (\ref{equ.10}) we obtain:
\begin{multline*}
\chi_{0}^{i+2^{n-1}}+\chi_{1}^{i+2^{n-1}}\cdot2+\cdots+\chi_{n-1}%
^{i+2^{n-1}}\cdot2^{n-1}+\chi_{n}^{i+2^{n-1}}\cdot2^{n}
\equiv\\
\chi_{0}^{i}+\chi_{1}^{i}\cdot2+\cdots+\chi_{n}^{i}\cdot2^{n}%
+2^{n-1}+(\chi_{n-1}^{0}+\chi_{n}^{0}+\chi_{n}^{2^{n-1}}+\gamma)2^{n}%
\pmod{2^{n+1}};%
\end{multline*}
henceforth, %
\begin{equation}
\chi_{n}^{i+2^{n-1}}\equiv\chi_{n-1}^{i}+\chi_{n}^{i}+\chi_{n-1}^{0}+\chi
_{n}^{0}+\chi_{n}^{2^{n-1}}+\gamma\pmod2. \label{equ.11}%
\end{equation}
Note that the term  $\chi^i_{n-1}$ occurs in the right side due to the carry.

Now take (and fix) arbitrary $x=x_0\in\mathbb Z_2$. We claim that the function $y(i)=\gamma(i;x)$ is   periodic with respect to the variable $i=0,1,2,\ldots$, and that the length of
the shortest period of $y(i)$ is a factor of $2^{N_2(f)}$.

Denote $N=N_2(f)$. As $y(\ell)=\delta_{1}%
((f^{\ell}(x))_{2}^{^{\prime}})$ by the definition, $y(\ell)$ can not
depend on $n$ once $n\ge N+1$; furthermost,  we have that
$y(i+2^N)=\delta_{1}((f^{i+2^N}(x))_2^\prime)$.
Using sequentially chain rule and Lemma \ref{le:prod} for $z=f^{i}(x)$ we
get:
\begin{equation*}
(f^{i+2^N}(x))_{2}^{^{\prime}}\equiv
\prod\limits_{j=0}^{i+2^N-1}f_{2}^{^{\prime}}(f^{j}(x))\equiv
\prod\limits_{j=0}^{i-1}f_{2}^{^{\prime}}(f^{j}(x))
\prod\limits_{j=0}^{2^N-1}f_{2}^{^{\prime}}(f^{j+i}(x))\equiv
\prod\limits_{j=0}^{i-1}f_{2}^{^{\prime}}(f^{j}(x))\equiv (f^{i}(x))_{2}^{^{\prime}} \pmod4.
\end{equation*}
Therefore, $y(i+2^N)=\delta_{1}((f^{i+2^N}(x))_2^\prime)=\delta_{1}((f^{i}(x))_2^\prime)=y(i)$.
This proves our claim and Theorem \ref{thm.11}.\qed
\subsection{Proof of Theorem \ref{thm:non-lin}}
\label{ssec:pr-thm:non-lin}

The proof mimics respective steps of the proof of Theorem \ref{thm.11}.

\underline{Step 1:} As $f^{2^{n-2}}(x)=x+2^{n-2}\varphi(x)$
for  a suitable map $\varphi\colon\mathbb{Z}_{2}\rightarrow\mathbb{Z}_{2}$,
given $n\geq N_{3}(f)+2$
we have that
\begin{equation}
\label{eq:main1}
f^{i+2^{n-2}}(x)
\equiv f^{i}(x)+2^{n-2}\varphi(x)(f^{i}(x))_{3}^{^{\prime}}\pmod{2^{n+1}},
\end{equation}
cf. \eqref{eq:f-iter} and \eqref{eq:main}.

\underline{Step 2:} Denote $\varphi(x)\equiv\alpha+2\beta+4\gamma(\operatorname{mod}8)$, for
suitable $\alpha,\beta,\gamma\in\{0,1\}$. We prove that $\alpha=1$ 
exactly in the same way as in the proof of Theorem \ref{thm.11}.

\underline{Step 3:} We have then that $(f^{i}(x))_{3}^{^{\prime}}=1+2\lambda+4\eta
\pmod8$, for suitable $\lambda,\eta\in\{0,1\}$. Therefore,
\begin{multline}
\label{equ.14}
f^{i+2^{n-2}}(x)=f^{i}(f^{2^{n-2}}(x))=f^{i}(x+2^{n-2}\varphi
(x))
\equiv \\
f^{i}(x)+2^{n-2}+2^{n-1}(\beta+\lambda)+2^{n}(\beta\lambda
+\gamma+\eta)\pmod{2^{n+1}}, %
\end{multline}
cf. \eqref{equ.8}.

\underline{Step 4:} Now we act as in the proof of \eqref{equ.11}. On the
one hand,
\begin{multline}
f^{2^{n-2}}(\chi_{0}^{0}+\chi_{1}^{0}\cdot2+\cdots)\equiv\chi_{0}^{2^{n-2}%
}+\chi_{1}^{2^{n-2}}\cdot2+\cdots+\chi_{n-1}^{2^{n-2}}\cdot2^{n-1}+\chi
_{n}^{2^{n-2}}\cdot2^{n}\\
\equiv\chi_{0}^{2^{n-2}}+\chi_{1}^{2^{n-2}}\cdot2+\cdots+(\chi_{n-2}^{0}%
\oplus1)\cdot2^{n-2}+\chi_{n-1}^{2^{n-2}}\cdot2^{n-1}+\chi_{n}^{2^{n-2}}%
\cdot2^{n} \pmod{2^{n+1}}, \label{equ.15}%
\end{multline}
while on the other hand,
\[
f^{2^{n-2}}(\chi_{0}^{0}+\chi_{1}^{0}\cdot2+\cdots)\equiv\chi_{0}^{0}+\chi
_{1}^{0}\cdot2+\cdots+\chi_{n}^{0}\cdot2^{n}+2^{n-2}+2^{n-1}\beta+2^{n}%
\gamma\pmod{2^{n+1}}.%
\]
From here we deduce that
$
\chi_{n-1}^{2^{n-2}}=\chi_{n-1}^{0}\oplus\chi_{n-2}^{0}\oplus\beta;
$
henceforth
\begin{equation}
\beta\equiv\chi_{n-1}^{0}+\chi_{n-2}^{0}+\chi_{n-1}^{2^{n-2}}%
\pmod2 \label{equ.16},%
\end{equation}
cf. \eqref{equ.10}. Now, combining together
(\ref{equ.14}),(\ref{equ.15}), (\ref{equ.16}), we get
\begin{multline*}
\chi_{0}^{i+2^{n-2}}+\chi_{1}^{i+2^{n-2}}\cdot2+\cdots+\chi_{n-1}%
^{i+2^{n-2}}\cdot2^{n-1}+\chi_{n}^{i+2^{n-2}}\cdot2^{n}
\equiv\\
\chi_{0}^{i}+\chi_{1}^{i}\cdot2+\cdots+\chi_{n}^{i}\cdot2^{n}%
+2^{n-2}+2^{n-1}(\chi_{n-2}^{0}+\chi_{n-1}^{0}+\chi_{n-1}^{2^{n-2}}%
+\lambda)+2^{n}(\beta\lambda+\gamma+\eta)\pmod{2^{n+1}};%
\end{multline*}
so we conclude that
\begin{equation*}
\chi_{n-1}^{i+2^{n-2}}\equiv\chi_{n-2}^{i}+\chi_{n-1}^{i}+\chi_{n-2}^{0}%
+\chi_{n-1}^{0}+\chi_{n-1}^{2^{n-2}}+\lambda\pmod2  
\end{equation*}
and that
\[
\chi_{n}^{i+2^{n-2}}\equiv\chi_{n-2}^{i}\chi_{n-1}^{i}+\chi_{n-2}^{i}%
(\chi_{n-2}^{0}+\chi_{n-1}^{0}+\chi_{n-1}^{2^{n-2}}+\lambda)+\chi_{n-1}%
^{i}(\chi_{n-2}^{0}+\chi_{n-1}^{0}+\chi_{n-1}^{2^{n-2}}+\lambda)+\chi_{n}%
^{i}+\beta\lambda+\gamma+\eta\pmod2. 
\]
From here we finally obtain that
\[
\chi_{n}^{i+2^{n-2}}\equiv\chi_{n-2}^{i}\chi_{n-1}^{i}+\theta(n)(\chi
_{n-2}^{i}+\chi_{n-1}^{i})+\chi_{n}^{i}+y_{i}\pmod2,
\]
where
$\theta(n)\equiv\chi_{n-2}^{0}+\chi_{n-1}^{0}+\chi_{n-1}^{2^{n-2}}+\lambda
\pmod
2$ and
$y_i\equiv\beta\lambda+\gamma+\eta\pmod 2$.

\underline{Step 5:} Take and fix arbitrary $x\in\mathbb Z_2$ and $n\ge N_3(f)+2$;
therefore we fix $\beta, \gamma\in\{0,1\}$, however, both $\beta$ and $\gamma$
depend on $n$. We claim that the binary sequence $(y_{i})_{i=0}^\infty$ is periodic, and
that the length of its shortest period is a factor of $2^{N_{3}(f)}$.

Indeed, by the chain rule
\begin{equation}
\label{eq:chain3}
\left(f^{\ell}(z)\right)_{3}^{^{\prime}}\equiv\prod\limits_{j=0}^{\ell-1}f_{3}^{^{\prime}%
}(f^{j}(x))\pmod8,
\end{equation}
for arbitrary $z\in\mathbb Z_2$ and $\ell=1,2,\ldots$.
As $f$ is a transitive T-function, 
$f^{i+2^{N_{3}(f)}}(x)= f^{i}(x+2^{N_{3}(f)}\Phi(x))$ for a suitable
$\Phi\colon\mathbb Z_2\to\mathbb Z_2$ (cf. \eqref{eq:f-iter} and \eqref{eq:main});
and moreover,
$$
f^{j}(x+2^{N_{3}(f)}\Phi(x))=f^{j}
(x)\bmod2^{N_{3}(f)}+2^{N_{3}(f)}\widetilde\Phi_j(x),
$$
where $f^{j}(x)\bmod2^{N_{3}(f)}$ stands for the least non-negative residue
of $f^{j}(x)$
modulo $2^{N_{3}(f)}$  and $\widetilde\Phi_j(x)\in\mathbb Z_2$.
Now combining the latter equality with 
\eqref{eq:chain3}
we see that
\begin{multline}
\label{eq:gamma-per}
\left(f^{i+2^{N_{3}(f)}}(x)\right)^\prime_3=
\left(f^{i}(x+2^{N_{3}(f)}\Phi(x))\right)^\prime_3\equiv\prod\limits_{j=0}^{i-1}f_{2}^{^{\prime}}\left(f^{j}(x+2^{N_{3}%
(f)}\Phi(x))\right)
\equiv\\
\prod\limits_{j=0}^{i-1}f_{2}^{^{\prime}}\left(f^{j}%
(x)\bmod2^{N_{3}(f)}+2^{N_{3}(f)}\widetilde\Phi_j(x)\right)
\equiv
\prod\limits_{j=0}^{i-1}f_{2}^{^{\prime}}\left(f^{j}%
(x)\bmod2^{N_{3}(f)}\right)
\equiv \left(f^i(x)\right)_3^\prime\pmod 8,
\end{multline}
as $f_{2}^{^{\prime}}(x)$ is a periodic function with a period of length
$2^{N_{3}(f)}$, cf. Proposition \ref{prop:diff_mod}.

Now, as
$\lambda=\delta_{1}((f^{i}%
(x))_{3}^{^{\prime}})$ and $\eta=\delta_{2}((f^{i}(x))_{3}^{^{\prime}}%
)$ 
the functions $\lambda=\lambda(i)$ and $\eta=\eta(i)$ are periodic with respect to the
variable $i=0,1,2\ldots$, 
and lengths of their shortest periods are
factors of $2^{N_{3}(f)}$. Consequently, the sequence $(y_i)_{i=0}^\infty$ is periodic, and the length of its shortest period is $2^K$ for some $0\le K\le
N_3(f)$.\qed

\bigskip

\end{document}